\titleformat{\section}{\bfseries\scshape\Large}{\thesection}{1em}{}{}
\titleformat*{\subsection}{\scshape\bfseries\large}
\newcommand{\IN}{\mathbb{N}}
\newcommand{\IZ}{\mathbb{Z}}
\newcommand{\IR}{\mathbb{R}}
\newcommand{\IC}{\mathbb{C}}
\newcommand{\R}{\mathbb{R}}
\newcommand{\Z}{\mathbb{Z}}
\newcommand{\N}{\mathbb{N}}
\newcommand{\EE}{\mathbb{E}}
\newcommand{\PP}{\mathbb{P}}
\newcommand{\FS}{\mathcal{F}}
\newcommand{\cA}{\mathcal{A}}
\newcommand{\cD}{\mathcal{D}}\newcommand{\cP}{\mathcal{P}}
\newcommand{\cS}{\mathcal{S}}
\newcommand{\cI}{\mathcal{I}}
\newcommand{\cZ}{\mathcal{Z}} 
\newcommand{\eps}{\varepsilon}
\newcommand{\ph}{\varphi}
\newcommand{\dG}{\mathsf{d}\Gamma}
\numberwithin{equation}{section}
\newtheorem{theorem}{Theorem}[section]
\newtheorem{lemma}[theorem]{Lemma}
\newtheorem{proposition}[theorem]{Proposition}
\newtheorem{cor}[theorem]{Corollary}
\theoremstyle{definition}
\theoremstyle{remark}
\newtheorem{remark}[theorem]{Remark}
\crefname{hyp}{Hypothesis}{Hypotheses}
\Crefname{hyp}{Hypothesis}{Hypotheses}
\crefname{lemma}{Lemma}{Lemmas}
\Crefname{lemma}{Lemma}{Lemmas}
\crefname{enumi}{}{}
\Crefname{enumi}{}{}
\crefname{equation}{}{}
\Crefname{equation}{}{}
\def\titlename{\scshape  Correlation Bound for a  One-Dimensional Continuous Long-Range Ising Model}
\title{\LARGE\scshape  Correlation Bound  for a  One-Dimensional  Continuous Long-Range Ising Model}
\author{David Hasler\footnote{\texttt{david.hasler@uni-jena.de}} \qquad Benjamin Hinrichs\footnote{\texttt{benjamin.hinrichs@uni-jena.de}} \qquad Oliver Siebert\footnote{Present Affiliation: 
		\'Ecole polytechnique f\'ed\'erale de Lausanne, \texttt{oliver.siebert@epfl.ch}}\\ Friedrich-Schiller-University Jena\\\small Department of Mathematics\\[-.5em]\small   Ernst-Abbe-Platz 2\\[-.5em]\small  07743 Jena\\[-.5em] \small Germany}
\newcommand{\shortauthors}{D. Hasler, B. Hinrichs, O. Siebert}
\newcommand{\Id}{{\mathds{1}}}
\newcommand{\wde}{{ w}^{(\delta)}}
\renewcommand{\EE}{\mathds{E}}
\newcommand{\bL}[1]{\braket{#1}^{(L)}}
\newcommand{\bLe}[1]{\braket{#1}^{(L,\eta)}}
\newcommand{\bn}[1]{\llangle{#1}\rrangle^{\mathrm{(n)}}}
\newcommand{\Bn}[1]{\left\llangle{#1}\right\rrangle^{\mathrm{(n)}}}
\newcommand{\bw}[1]{\llangle#1\rrangle}
\newcommand{\id}{{ \mathfrak{i}_\delta}}
\newcommand{\Ld}{{ L_\delta}}
\newcommand{\jd}{{ j_\delta}}
\newcommand{\sd}[1]{\sigma_{\id(#1)}}
\newcommand{\fd}{{\mathfrak s_\delta}}
\DeclareFontFamily{OMX}{MnSymbolE}{}
\DeclareFontShape{OMX}{MnSymbolE}{m}{n}{
	<-6>  MnSymbolE5
	<6-7>  MnSymbolE6
	<7-8>  MnSymbolE7
	<8-9>  MnSymbolE8
	<9-10> MnSymbolE9
	<10-12> MnSymbolE10
	<12->   MnSymbolE12}{}
\DeclareSymbolFont{mnlargesymbols}{OMX}{MnSymbolE}{m}{n}
\DeclareMathDelimiter{\llangle}{\mathopen}{mnlargesymbols}{'164}{mnlargesymbols}{'164}
\DeclareMathDelimiter{\rrangle}{\mathclose}{mnlargesymbols}{'171}{mnlargesymbols}{'171}
\begin{document}
	
	\maketitle\thispagestyle{empty}
	\begin{abstract}
	\noindent
	We consider a measure  given as  the   continuum limit of a one-dimensional Ising model with long-range translationally invariant  interactions.
	Mathematically, the   measure can be described by a self-interacting  Poisson driven jump process. We prove a     correlation inequality,
estimating  the  magnetic susceptibility of  this model, which holds   for  small $L^1$-norm of the interaction function.
The bound on the magnetic susceptibility has applications in quantum field theory and  can be used to prove existence
of ground states for the spin boson model.
	\end{abstract}

\section{Introduction and Result}\label{sec:introandresult}

The spin boson model describes a two-level quantum mechanical system linearly coupled to a quantized bosonic field. If the bosons have relativistic dispersion
relation the spin boson model provides a  caricature of a confined non-relativistic quantum
mechanial system interacting with the quantized electromagnetic field.
This is one of the reasons the  spin boson model has been intensively investigated
and questions about existence of a ground state of the spin boson Hamiltonian
are of interest.
In a recent paper \cite{HaslerHinrichsSiebert.2021a}, we showed that the
spin boson model has a square integrable  ground state in situtations where the coupling function
can have infrared singularities as long as the Hamiltonian stays bounded from below.
The result assumed a resolvent bound.
It is well-known that the ground state energy of the spin boson model
can be expressed in terms of an expectation of a one-dimensional continuous
Ising model with long range couplings  \cite{EmeryLuther.1974}.
In this paper we prove a bound  about this continuous Ising model,
which can be used to obtain the resolvent bound needed in  \cite{HaslerHinrichsSiebert.2021a}.

The Ising model is a mathematical model of ferromagnetism  and has been intensively investigated. The magnetic dipole moments are
approximated by the values $\{ + 1 , - 1 \}$  often referred to as Ising spins.
The Ising spins are typically arranged on a lattice. In this paper
we consider a  one-dimensional continuous Ising model which
is described in terms of a jump process and a long range  interaction given by  a nonnegative
symmetric integrable function. The main result of this paper is a correlation bound, which  has the
the physical interpretation of a bound on the magnetic  susceptibility. Thus, the bound
which we prove is of its own physical interest. In fact, to prove the bound we will consider a scaling limit
of an Ising model on the one-dimensional lattice, where the nearest neighbor coupling becomes arbitrarily
large. Thus to  obtain the  main result we will prove   a correlation  bound for  the   Ising model
on $\Z$. That bound is  of its own  interest  and can be viewed as a result  between   the  results of  Dyson  \cite{Dyson.1969} and of Rogers and Thompson
\cite{RogersThompson.1981}  for  the  Ising model on the one-dimensional lattice. 

Let us now give the explicit definition. Let $N(t)$ with $t\in\IR$ be a two-sided Poisson process with unit intensity and let $B$ be an independent Bernoulli random variable with $\PP(B=1)=\PP(B=-1)=\frac 12$. Then define $X(t)$ to be the jump process
\begin{equation}\label{def:X}
	X(t) = B(-1)^{N(t)}.
\end{equation}
We give an overview of the connection between the jump process and the spin boson model, as motivated in the beginning of this introduction, in \cref{sec:spinboson}. This allows us to describe the desired second derivative of the ground state energy as an expectation value (cf. \cref{eq:seconddermu0}).

To state the main result of this paper, assume $W:\IR\to\IR$ is continuous. We then define the canonical partition function
\begin{equation}\label{def:partitionjump}
	\cZ(W,T) = \EE\left[\exp\left(\int_{-T}^T\int_{-T}^T W(t-s)X(t)X(s)dtds\right)\right] \qquad\mbox{for}\ T>0.
\end{equation}
\begin{remark}
The integral occuring in \eqref{def:partitionjump} is a Riemann integral. If the jump process is realized as a random variable on a space $\Omega$,
then for almost every $\omega\in\Omega$ the function  $t \mapsto  X(t)(\omega)$ has only finitely many discontinuities  on compact subsets and is hence
Riemann integrable.
\end{remark}
Our central result is the following.
\begin{theorem} \label{main}  There exist constants $\eps>0$ and $C> 0$, such that for all continuous and even $W \in L^1(\IR)$ with $W \geq 0$ and $\| W \|_1 \leq \eps$, we have
	\[\limsup_{T \to \infty}  \frac{1}{\cZ(W,T)}
	\EE \left[  \frac 1T  \left( \int_{-T}^T X(t) dt \right)^2  \exp\left( \int_{-T}^T \int_{-T}^T X(t) X(s) W(t-s) dt ds \right)  \right]  \le C.\]
\end{theorem}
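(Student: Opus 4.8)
\section*{Proof proposal}

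The plan is to realize the continuum model as a scaling limit of a ferromagnetic Ising chain and to prove the bound on the lattice, where the classical correlation inequalities are available; the lattice estimate is the standalone result alluded to as lying between those of Dyson and of Rogers--Thompson. Fix $\delta>0$ and discretize $[-T,T]$ into sites $i\delta$. Since $N(t)$ has independent increments, the variables $\sigma_i:=X(i\delta)$ form a Markov chain whose law is, thanks to the symmetric Bernoulli factor $B$, exactly the free (zero external field) nearest-neighbor Ising measure with coupling $J_\delta$ determined by $\tanh J_\delta=e^{-2\delta}$; note $\sigma_i\sigma_j=(-1)^{N(i\delta)}(-1)^{N(j\delta)}$ does not see $B$. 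In this reference measure the two-point function is $\langle\sigma_i\sigma_j\rangle_0=e^{-2\delta|i-j|}$, the discretization of $\EE[X(t)X(s)]=e^{-2|t-s|}$. The Riemann sum for the interaction is $\delta^2\sum_{i,j}W((i-j)\delta)\sigma_i\sigma_j$, so the effective long-range couplings $K_{ij}=\delta^2 W((i-j)\delta)\ge0$ are ferromagnetic with row sum $\sum_j K_{ij}\approx\delta\|W\|_1$, while the quantity in the theorem is (up to the $\delta\to0$ error controlled by the a.s.\ Riemann integrability noted in the remark) $\chi_T=\tfrac{\delta^2}{T}\sum_{i,j}\langle\sigma_i\sigma_j\rangle_T$.

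The heart of the argument is a differential inequality for the susceptibility in the strength of the long-range coupling. Keep $J_\delta$ fixed and scale $W\mapsto\lambda W$, $\lambda\in[0,1]$, writing $\langle\cdot\rangle_\lambda$ for the corresponding Gibbs expectation and $\chi(\lambda)=\sum_j\langle\sigma_0\sigma_j\rangle_\lambda$ for the lattice susceptibility. A direct differentiation of the Gibbs expectation gives $\tfrac{d}{d\lambda}\langle\sigma_0\sigma_j\rangle_\lambda=\sum_{a,b}K_{ab}\big(\langle\sigma_0\sigma_j\sigma_a\sigma_b\rangle_\lambda-\langle\sigma_0\sigma_j\rangle_\lambda\langle\sigma_a\sigma_b\rangle_\lambda\big)$. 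Since the full model (nearest-neighbor plus long-range) is ferromagnetic with no external field, the Lebowitz inequality yields nonpositivity of the truncated four-point function $u_4$, i.e.\ $\langle\sigma_0\sigma_j\sigma_a\sigma_b\rangle-\langle\sigma_0\sigma_j\rangle\langle\sigma_a\sigma_b\rangle\le\langle\sigma_0\sigma_a\rangle\langle\sigma_j\sigma_b\rangle+\langle\sigma_0\sigma_b\rangle\langle\sigma_j\sigma_a\rangle$, whose right-hand side is nonnegative by the Griffiths (GKS) inequalities. Summing over $j$, using translation invariance ($\sum_j\langle\sigma_j\sigma_b\rangle_\lambda=\chi(\lambda)$) and the symmetry $K_{ab}=K_{ba}$, the row sum of $K$ factors out and one obtains the closed inequality $\chi'(\lambda)\le 2\big(\sum_b K_{0b}\big)\chi(\lambda)^2$. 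After inserting the scaling factors that turn $\chi(\lambda)$ into the quantity $\chi_T(\lambda)$, the $\delta$-dependence cancels and this becomes
\[
\frac{d}{d\lambda}\,\chi_T(\lambda)\;\le\; c\,\|W\|_1\,\chi_T(\lambda)^2
\]
with a universal constant $c$.

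Integrating this Riccati-type inequality from $\lambda=0$ to $\lambda=1$ gives $\chi_T(1)\le \chi_T(0)/\big(1-c\,\|W\|_1\,\chi_T(0)\big)$ as long as the denominator is positive. The reference value $\chi_T(0)$ is the free susceptibility $\tfrac1T\int_{-T}^T\int_{-T}^T e^{-2|t-s|}\,dt\,ds$, which is bounded by $2$ uniformly in $T$ (and so is its lattice counterpart, uniformly in $\delta$). Hence, choosing $\eps>0$ with $2c\eps<1$, for every even $W\ge0$ with $\|W\|_1\le\eps$ we obtain $\chi_T(1)\le C:=2/(1-2c\eps)$ uniformly in $T$ and $\delta$. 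Passing to the continuum limit $\delta\to0$ (the partition function $\cZ(W,T)$ and the numerator are limits of their Riemann sums, using the a.s.\ Riemann integrability of the piecewise-constant paths), and then taking $\limsup_{T\to\infty}$, yields the claimed bound.

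The main obstacle is the rigorous scaling limit: one must show that the discrete partition function and correlation sums converge to their continuum counterparts, uniformly enough that the $\delta$-independent lattice bound survives the limit. The clean route is to establish the susceptibility bound on the lattice for every $\delta$ with a constant depending only on $\|W\|_1$, and only afterwards let $\delta\to0$; this isolates the analytic work (convergence of expectations of the relevant exponential functionals, via dominated convergence and the Riemann integrability of the a.s.\ piecewise-constant paths) from the probabilistic correlation-inequality core. A secondary point requiring care is that the Lebowitz and Griffiths inequalities are invoked for the full model at each $\lambda$, so one must confirm that the discretized measure is genuinely a finite-volume ferromagnetic Ising measure without external field, and that spins outside $[-T,T]$, which enter only through the fixed nearest-neighbor bonds and appear in neither the observable nor the interaction, integrate out so that the expectation reduces to the free Ising chain on the sites inside $[-T,T]$.
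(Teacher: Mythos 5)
Your proposal reaches the theorem by a genuinely different route from the paper. The paper's core technical result (its Proposition \ref{prop:corbound}) is the lattice bound $\sum_i\braket{\sigma_i\sigma_j}_{J_w}\le C_\eps/(1-\tanh w_1)$, proved by an iterative edge-removal scheme built only on the Griffiths--Kelly--Sherman and Thompson inequalities (its Lemma \ref{lem:standardbounds}), under the hypothesis $\sum_{l\ge2}\tanh w_l\le\eps(1-\tanh w_1)$; the continuum limit is then handled by weak convergence of the pushforward Ising measures on Skorokhod space, with tightness proved by a combinatorial count of sign changes. You instead derive the lattice susceptibility bound from the Lebowitz inequality via a differential (Riccati) inequality in the strength $\lambda$ of the long-range coupling, using the exactly solvable free chain as the initial condition $\chi(0)\sim 1/\delta$, so that the crucial $\delta$-independence of $\kappa\,\chi(0)\sim\|W\|_1$ falls out automatically. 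This is the Aizenman--Newman-type argument that the paper explicitly cites (via Spohn's Proposition 8.1) as an alternative available under extra decay assumptions on $W$, and which the authors deliberately replace by a more elementary GKS/Thompson iteration. Your treatment of the scaling limit is also different and, if carried out, arguably lighter than the paper's: by identifying $\sigma_i=X(i\delta)$ you realize the lattice Gibbs states and the continuum measure on the same probability space (your computation $\tanh J_\delta=e^{-2\delta}$ is exact), so almost-sure Riemann-sum convergence plus dominated convergence would replace the weak-convergence/tightness machinery of the paper's Section \ref{sec:continuumlimit}.

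Two points need repair before the argument is complete. First, you invoke translation invariance, $\sum_j\braket{\sigma_j\sigma_b}_\lambda=\chi(\lambda)$ independent of $b$, to close the differential inequality; this is false for the finite-volume chain on $\Lambda_{L}$ with free boundary conditions, which is where the Lebowitz inequality is being applied. The standard fix is to work with $\bar\chi(\lambda)=\max_b\sum_j\braket{\sigma_j\sigma_b}_\lambda$ (each $\chi_b$ satisfies $\chi_b'\le 2\kappa\,\bar\chi^2$, so the Riccati comparison survives for the maximum), or to pass to the translation-invariant infinite-volume state first, as the paper does with its $\eta$-regularization. Second, the continuum limit is only sketched; you correctly identify it as the main remaining work, and since the theorem's numerator contains $\tfrac1T(\int_{-T}^T X)^2$ rather than a fixed bounded observable, you must also justify exchanging the $\delta\to0$ limit with the reweighting by $e^{\mathcal J_\delta}$ (uniform boundedness of the exponent by $4T^2\sup_{[-2T,2T]}|W|$ makes this routine on your coupled probability space). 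Neither issue is fatal, but both must be addressed for the proof to stand.
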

\begin{remark} This result equivalently holds, if we choose an arbitrary intensity $\lambda$ of the Poisson process $N$ in \cref{def:X}. Note that the constant $C$, however, is not independent of $\lambda$. This can be seen by a simple scaling argument.
\end{remark}
\begin{remark}
In the special  case, where the integrable  $W \geq 0$ satisfies the additional condition
$W(t) \sim t^{-2}$ as $ t \to \infty$, a  bound as in \cref{main}    follows from  \cite[Proposition 8.1]{Spohn.1989}  for the conditioned process with boundary conditions $X(T)= X(-T)$.
 The proof  given in  \cite{Spohn.1989}  is based on results from  percolation theory  \cite{AizenmanNewman.1986}.  
\end{remark}

\begin{remark}
The bound in \cref{main}  is in general not expected to   hold for arbitrary large $\eps > 0$, as the following results indicate.
 The one-dimensional long-range Ising model with spins $\sigma_i = \pm 1$, $i \in \Z$ and  interaction energy   $\sum_{i,j} J(i-j) \sigma_i \sigma_j $ with
$J(n) = n^{-\alpha}$  has a phase transition  if  $1 < \alpha \leq 2$. In that case   the  magnetic susceptibility diverges for sufficiently small
temperatures. This was shown in \cite{Dyson.1969} for $1 < \alpha < 2$ and in \cite{AizenmanChayesChayesNewman.1980}
for $\alpha = 2$. It is reasonable to believe that such a divergence  carries over to the continuous  model, since the continuous
model can be obtained by a scaling limit of the discrete model if  an additional nearest neighbor coupling is imposed.
For details on the scaling limit,   we refer the reader to  \cref{sec:continuumlimit} and also  \cite{SpohnDuemcke.1985,Spohn.1989}.
\end{remark}

The paper is organized as follows. In \cref{sec:spinboson}, we describe the connection between the jump process $X$ and the spin boson model. This illustrates our motivation to study the problem, but is not relevant for the proof of the result. In \cref{sec:cor}, we prove an upper bound for correlation functions in the one-dimensional Ising model on $\IZ$. This is the main technical ingredient of our proof and can also be seen in a long line of such estimates, as described in the beginning of that section. In \cref{sec:continuumlimit}, we then prove the jump process is the continuum limit of a specific Ising model. We summarize our proof of \cref{main} in \cref{sec:proof}.

\section{Magnetic Susceptibility and   the Spin Boson Model}\label{sec:spinboson}

This section is  not needed for the proof of   \cref{main}, but rather puts the result into a broader context.
 First, we  show  that the expression in \cref{main}
is equal  to the magnetic susceptibility. Then, we relate the result to the spin boson model. In particular, we sketch
how \cref{main} can be used to show
that the ground state energy satisfies a derivative bound, which was used in \cite{HaslerHinrichsSiebert.2021a} to
prove the existence of a ground state for the spin boson model. This was  our main motivation to prove \cref{main}.

Adding a constant magnetic field $\mu \in \R$ to the interaction, we obtain the canonical partition function
\begin{align*}
	 \cZ_\mu(W,T)        &  = \EE\left[\exp\left(\int_{-T}^T\int_{-T}^TW(t-s)X(t)X(s)dsdt + \mu \int_{-T}^T X(t)dt\right)\right]  .
\end{align*}
 The magnetization is then defined as
$$
\mathcal{M}_\mu(W,T) = \frac{1}{T} \frac{\partial}{\partial \mu}  \ln  \cZ_\mu(W,T)
$$
and the magnetic susceptibility is defined as
$$
\mathcal{X}_\mu(W,T)  =  \frac{\partial}{\partial \mu} \mathcal{M}_\mu(W,T)  .
$$
A straightforward calculation shows that the  the magnetic susceptibility at zero satisfies
$$
 \left. \mathcal{X}_\mu(W,T) \right|_{\mu = 0 } = \frac{1}{\cZ(W,T)}
	\mathbb{E} \left[  \frac 1T  \left( \int_{-T}^T X(t) dt \right)^2  \exp\left( \int_{-T}^T \int_{-T}^T X(t) X(s) W(t-s) dt ds \right)  \right]  ,
$$
which is the expression estimated in \cref{main}.

\bigskip\noindent
Now, let us consider the spin boson model with an external magnetic field.
We sketch the  relation of  the second order derivative of the ground state energy with respect to the magnetic field
to the  expression estimated in  \cref{main}.
In \cite{HaslerHinrichsSiebert.2021a}, we proved that an upper bound on the magnetic susceptibility in the spin boson model implies existence of ground states, if it is uniform in the photon mass.
It is well-known that the ground state energy of the spin boson model can be equivalently described as a jump process, which itself is the continuum limit of the one-dimensional Ising model \cite{EmeryLuther.1974}. This duality has been used to study the spin boson model in the past \cite{FannesNachtergaele.1988,SpohnDuemcke.1985,Spohn.1989,Abdessalam.2011,HirokawaHiroshimaLorinczi.2014}. In this spirit, our result is formulated as a bound on the expectation value of a Poisson-driven jump process.

We use notation similar to \cite{HaslerHinrichsSiebert.2021a} and refer the reader to that paper for more rigorous definitions.
We fix a measurable function $\omega:\IR^d\to[0,\infty)$ with $\omega>0$ almost everywhere, an element $v\in L^2(\IR^d)$ such that $v/\sqrt \omega\in L^2(\IR^d)$ and a coupling constant $\lambda\in\IR$.
Let $\FS$ be the Fock space over $L^2(\IR^d)$ and denote by $\dG(\omega)$ and $a^*(v)$, $a(v)$  the usual second quantization operator of $\omega$ and the creation/annihilation operators corresponding to $v$, respectively.
 Further, assume $\sigma_x$ and $\sigma_z$ are the usual Pauli matrices. Then, we define the spin boson Hamiltonian with an  external field of strength $\mu\in\IR$ as the selfadjoint lower-bounded operator acting on $\IC^2\otimes\FS$ as
\begin{equation}
	H(\mu) = ( \sigma_z + \Id ) \otimes \Id + \Id \otimes \dG(\omega) + \sigma_x \otimes \left(\lambda ( a^*(v) + a(v))  + \mu \Id\right).
\end{equation}
We investigate properties of the ground state energy
\begin{equation}
	E(\mu) = \inf\sigma(H(\mu)).
\end{equation}
To that end, we use Bloch's formula. Let $\Omega_\downarrow = \begin{pmatrix}0\\1\end{pmatrix}\otimes \Omega$, where $\Omega$ denotes the Fock space vacuum.
Then Bloch's formula states that for
 \begin{equation}  \label{eq:ET}
E_T(\mu) =  - \frac 1T\ln \Braket{\Omega_\downarrow,e^{-TH(\mu)}\Omega_\downarrow}
\end{equation}
one has
\begin{equation}\label{Bloch}
	E(\mu) = \lim_{T\to\infty}  E_T(\mu) .
\end{equation}
A  rigorous  proof  of  \eqref{Bloch}  can  be obtained by a straightforward application of the  spectral theorem, provided  $1_{ H(\mu) \leq E(\mu) + \eps } \Omega_\downarrow \neq 0$ for all $\eps > 0$. This  last assumption can be shown using that $e^{ - T H(\mu)}$ is positivity improving, which has been shown for example  in  \cite{HaslerHerbst.2010}
for $\mu = 0$ and follows for arbitrary $\mu \in \R$ by a simple modification. Now the right hand side of  \cref{eq:ET}
can be calculated
using the Feynman-Kac formula and integrating out the quantum  field in the  so called Schr\"odinger  representation \cite{Simon.1979,LorincziHiroshimaBetz.2011}.
 Such a calculation yields
\begin{equation}\label{eq:feynmankac}
		 \braket{\Omega_\downarrow,e^{-TH(\mu)}\Omega_\downarrow} =  \cZ_\mu(W,T),
\end{equation}
where
\begin{equation} \label{reltoW}  W(t) = \frac {\lambda^2}8\int_{\IR^d}|v(k)|^2e^{-|t|\omega(k)}dk , \quad t \in \R . \end{equation}
We note that \cref{eq:feynmankac} has been shown in the literature for $\mu = 0$ \cite{HirokawaHiroshimaLorinczi.2014} and a similar  formula is derived in \cite{Spohn.1989} for  KMS states.  Note that the function $W(t)$ defined in  \eqref{reltoW}  is symmetric, continuous, and in $L^1(\R)$.
 Since we have not found an explicit proof of  \cref{eq:feynmankac} in the literature,  we plan to address this in a
forthcoming paper.

Now, inserting   \cref{eq:feynmankac} into    \cref{eq:ET}  we find
\begin{equation}
	E_T(\mu) =-  \frac{1}{T}\ln \cZ_\mu(W,T).
\end{equation}
Differentiating this expression  twice  with respect to $\mu$ and evaluating it  at zero,  we find from the calculation in  the first part of this section that
\begin{align*}
		E_T''(0) = - \mathcal{X}_\mu(W,T) |_{\mu=0} .
\end{align*}
Now, let us  consider  the limit $T\to\infty$. Provided one can show that
 the limit
\begin{equation} \label{limitother}
	E''(0) = \lim_{T\to\infty} E_T''(0) ,
\end{equation}
exists, one  obtains
 \begin{equation}\label{eq:seconddermu0}
	E''(0) = -\lim_{T\to\infty} \mathcal{X}_\mu(W,T) |_{\mu=0} .  
\end{equation}
Given \cref{eq:seconddermu0}, \cref{main}  yields  a bound on the second derivative of the ground state energy of the spin boson
model with respect to an external magnetic field.  This bound  is uniform  in the $L^1$-norm of $W$.
We note that     \cref{limitother} can be shown to hold for example  if the ground state energy  is isolated from the rest of the spectrum.
We plan to address this in a forthcoming paper.
It is well-known that there exists such a gap if $\inf_{k\in\IR^d}\omega(k)>0$ \cite{AraiHirokawa.1995}.

\section{ A Correlation Bound  for  the Ising Model}\label{sec:cor}

In this section, we introduce the Ising model and prove an upper bound on correlation functions, which will be  stated in \cref{prop:corbound}, below.
The novel aspect of this bound is that it can  accomodate  arbitrarily  large nearest neighbor couplings.
 This result is the main technical ingredient to our proof of \cref{main}. The connection between the jump process  and the Ising model will be treated in \cref{sec:continuumlimit}. Bounds on correlation functions of the Ising model have been studied throughout the literature, cf. \cite{Griffiths.1967a,KellySherman.1968,Ginibre.1970,Thompson.1971,RogersThompson.1981} and references therein. They are for example used to prove the existence of the thermodynamic limit and of phase transitions in the Ising model, cf. \cite{Griffiths.1967b,GallavottiMiracleSole.1967,Ruelle.1968,Dyson.1969,KacThompson.1969,FroehlichIsraelLiebSimon.1978,AizenmanNewman.1986}.

Let $L\in\IN$ and $\Lambda_L=\IZ\cap[-L,+L]$. We define the spin configuration space $\cS_L=\{-1,1\}^{\Lambda_L}$. For $\sigma=(\sigma_i)_{i\in \Lambda_L}\in\cS_L$ and $A\subset \Lambda_L$, we write
\begin{equation}
	\sigma_A = \prod_{i\in A}\sigma_i , 
\end{equation}
where we use the convention that $\sigma_\emptyset = 1$. 
For $J:\cP(\IZ)\to\IR$, we define the corresponding Ising energy
\begin{equation}
	E_{J,L}(\sigma) = -\sum_{A\subset \Lambda_L} J(A)\sigma_A
\end{equation}
and the partition function
\begin{equation}\label{def:partition}
	Z_{J,L} = \sum_{\sigma\in\cS_L}\exp(-E_{J,L}(\sigma)).
\end{equation}
In contrast  to the standard definitions in statistical mechanics, we absorb the thermodynamic parameter $\beta$ in the interaction function $J$.
The expectation value of a function $f:\cS_L\to\IR$ is now defined as
\begin{equation}\label{def:expIsing}
	\bL{f}_J = \frac{1}{Z_{J,L}}\sum_{\sigma\in\cS_L}f(\sigma)\exp(-E_{J,L}(\sigma)).
\end{equation}
For given $f:\cS_L\to\IR$ and $\tilde{L}\ge L$, we denote the function $\tilde{f}:\cS_{\tilde{L}}\to\IR$ with $\tilde{f}(\sigma) = f(\sigma|_{\Lambda_L})$ again by the same symbol $f$.
Then, if the thermodynamic limit $L\to\infty$ exists, we will drop the superscript  ${(L)}$ and write
\begin{equation}
	\braket{f}_{J} = \lim_{L\to\infty}\bL{f}_J.
\end{equation}
Especially, we note that the existence of the thermodynamic limit of correlation functions $\braket{\sigma_i\sigma_j}_J$ for $J:\cP(\IZ)\to[0,\infty)$ is well-known (cf. \cite{Griffiths.1967b} or \cref{cor:thermlim}).

For a sequence $w=(w_k)_{k\in\IN}\subset\IR$, we  define the associated pair interaction
\begin{equation}\label{def:pairinteraction}
	J_{w}: \cP(\IZ)\to\IR \quad\mbox{with}\quad \begin{cases} \{i,j\} \mapsto w_{|i-j|} & \mbox{for}\ i,j\in\IZ, i\ne j,\\ A \mapsto 0 & \mbox{for any other}\ A\subset\IZ.  \end{cases}
\end{equation}
In this section we prove the following proposition.
\begin{proposition}\label{prop:corbound}
	For  every   $\eps \in (0,\frac{1}{10})$ there exists a  $C_\eps>0$, such that for any $w=(w_k)_{k\in\IN}\in\ell^1(\IN)$ with $w\ge 0$ and \begin{equation} \label{eq:condw}
 \sum_{l=2}^{\infty}\tanh w_l  \le\eps(1-\tanh w_1) ,  \end{equation} we have
	\begin{equation} \label{eq:mainboundpropcorbound}
		\sum_{i\in\IZ}\braket{\sigma_i\sigma_j}_{J_w} \le \frac{C_\eps}{1-\tanh w_1} \qquad\mbox{for all}\ j\in\IZ.
	\end{equation}
\end{proposition}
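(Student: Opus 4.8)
The plan is to dominate each two-point function by a self-avoiding walk generating function and then sum over the endpoint, reducing the whole statement to an elementary counting estimate for self-avoiding walks on $\IZ$ in which the strong nearest-neighbor bond and the weak long-range bonds decouple geometrically. Writing $t_k=\tanh w_k$ and expanding $e^{w_{|i-j|}\sigma_i\sigma_j}=\cosh(w_{|i-j|})\,(1+t_{|i-j|}\sigma_i\sigma_j)$, the standard high-temperature (even-subgraph) expansion gives, in finite volume,
\[
\bL{\sigma_0\sigma_b}_{J_w}=\frac{\sum_{E':\,\partial E'=\{0,b\}}\prod_{e\in E'}t_{|e|}}{\sum_{E':\,\partial E'=\emptyset}\prod_{e\in E'}t_{|e|}},
\]
where $E'$ ranges over subsets of the edge set of $\Lambda_L$, an edge $e=\{i,j\}$ carries weight $t_{|e|}=t_{|i-j|}$, and $\partial E'$ is the set of odd-degree vertices. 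All terms are nonnegative.

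First I would establish the self-avoiding walk bound. For every $E'$ with $\partial E'=\{0,b\}$ I choose, by a fixed deterministic rule, a self-avoiding walk $\gamma=\gamma(E')\subseteq E'$ from $0$ to $b$. Since $\partial\gamma=\{0,b\}$, the remainder $E'\setminus\gamma$ is an even subgraph edge-disjoint from $\gamma$, so $E'\mapsto(\gamma,\,E'\setminus\gamma)$ is injective and multiplicative on weights into the set of pairs (self-avoiding walk, even subgraph). Enlarging the even component to range over all even subgraphs bounds the numerator by $\bigl(\sum_{\gamma:0\to b}\prod_{e\in\gamma}t_{|e|}\bigr)$ times the denominator, whence
\[
\bL{\sigma_0\sigma_b}_{J_w}\le\sum_{\gamma:0\to b}\ \prod_{e\in\gamma}t_{|e|}.
\]
Summing over $b$, invoking the known existence of the thermodynamic limit together with Griffiths monotonicity in the volume, I obtain $\sum_{i}\braket{\sigma_i\sigma_j}_{J_w}\le\Sigma$, uniformly in $j$, where $\Sigma:=\sum_{\gamma}\prod_{e\in\gamma}t_{|e|}$ and $\gamma$ now ranges over all self-avoiding walks on $\IZ$ started at $0$ with arbitrary endpoint.

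It remains to bound $\Sigma$, and here the decoupling of scales does the work. Every self-avoiding walk splits uniquely into maximal runs of nearest-neighbor ($\pm1$) steps separated by long-range ($\ge2$) steps; a maximal nearest-neighbor run cannot reverse without self-intersecting, hence is monotone, and dropping the self-avoidance constraint between distinct runs only increases the sum. Each (possibly empty) monotone nearest-neighbor run then contributes the factor $\rho:=1+2\sum_{m\ge1}t_1^{m}=\tfrac{1+t_1}{1-t_1}$, and each long-range step contributes at most $2\bar t$ with $\bar t:=\sum_{l\ge2}t_l$, so with $k$ long-range steps and $k+1$ runs,
\[
\Sigma\le\sum_{k\ge0}\rho^{\,k+1}(2\bar t)^{k}=\frac{\rho}{1-2\bar t\rho}
\qquad\text{whenever}\qquad 2\bar t\rho<1.
\]
The hypothesis $\bar t\le\eps(1-t_1)$ then yields $2\bar t\rho\le2\eps(1+t_1)\le4\eps<1$ for $\eps<\tfrac1{10}$, giving $\sum_{i}\braket{\sigma_i\sigma_j}_{J_w}\le\Sigma\le\tfrac{2}{(1-4\eps)(1-t_1)}$, i.e.\ the claim with $C_\eps=2/(1-4\eps)$.

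The genuinely non-routine points are two. The first is the self-avoiding walk domination: the delicate part is not the existence of a walk inside $E'$ but the edge-disjoint decomposition $E'=\gamma\sqcup(E'\setminus\gamma)$, which lets the even remainder be absorbed exactly into the denominator — this is the mechanism that tames the otherwise divergent loop contributions and is the reason one must pass through the ratio rather than bound the numerator alone. The second, and the real quantitative obstacle, is that $\rho=(1+t_1)/(1-t_1)$ diverges as $t_1\to1$, so the decoupled geometric series converges only under $2\bar t\rho<1$; the hypothesis is tailored precisely so that the total long-range weight $\bar t$ is small \emph{on the scale of the diverging nearest-neighbor correlation length} $(1-t_1)^{-1}$. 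I would expect the careful verification that self-avoidance forces monotone nearest-neighbor runs (so that the $t_1$-series resums to the finite $\rho$ rather than a divergent free-walk series) to be the step most prone to hidden error.
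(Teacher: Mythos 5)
Your proof is correct, and it takes a genuinely different route from the paper. The paper never leaves the world of correlation inequalities: it uses the Griffiths--Kelly--Sherman and Thompson inequalities to derive a recursive bound expressing $\bL{\sigma_i\sigma_j}_{J_w}$ in terms of translated correlations (with the nearest-neighbor bond handled by iterating Thompson's inequality into a geometric series $\sum_b \tau_1^b$), then sums over $i$, and closes the resulting self-consistent inequality for $M(\eta)=\sum_i e^{-\eta|i-j|}\braket{\sigma_i\sigma_j}_{J_w}$ after introducing an exponential regularization $\eta>0$ and a compact-support truncation of $w$ to justify the manipulations. You instead invoke the classical Fisher-type self-avoiding-walk domination obtained from the even-subgraph expansion, and then resum the walk generating function directly, with the key observation that a maximal nearest-neighbor run in a self-avoiding walk is forced to be monotone and hence resums to $\rho=(1+t_1)/(1-t_1)$ rather than to a divergent free-walk series. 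Both arguments hinge on the same quantitative mechanism --- the long-range weight $\bar t=\sum_{l\ge2}t_l$ must be small on the scale $(1-t_1)^{-1}$ set by the nearest-neighbor correlation length --- but your route is more elementary and self-contained: it avoids the regularization parameter, the compact-support approximation, and the two-stage limit at the end of the paper's proof, and it yields the slightly better constant $C_\eps = 2/(1-4\eps)$ versus the paper's $2/(1-10D\eps)$. The paper's approach, on the other hand, stays entirely within the GKS/Thompson framework that the surrounding literature (Dyson, Rogers--Thompson) is phrased in. All the delicate steps you flag are sound: the edge-disjoint decomposition $E'=\gamma\sqcup(E'\setminus\gamma)$ does produce an even remainder (odd minus odd degree at the endpoints, even minus even elsewhere), the injectivity and positivity let the remainder be absorbed into the denominator, and the monotonicity of nearest-neighbor runs is immediate since a reversal revisits the previous site.
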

\begin{remark}
We note that for  $v \in\ell^1(\IN)$  the  sequence  $ w = \beta v$  satisfies  the relation \eqref{eq:condw}    for sufficiently small  $\beta > 0$.
 Hence, our bound describes absence of long range order in the Ising model for any summable pair interaction provided the temperature is large  enough.
\end{remark}
\begin{remark}
	We note that  correlations estimates have been shown already a long time ago  in \cite{Dyson.1969,RogersThompson.1981}.
We generalize the result of \cite{Dyson.1969}, in the sense that  we can accomodate arbitrary large nearest neighbor couplings
and obtain an analogous correlation bound. On the other hand the assumptions in \cite{RogersThompson.1981}
or weaker but their asseration is weaker as well.
  Explicitly, Rogers and Thompson prove the estimate
	$ \lim\limits_{N\to\infty}\frac{1}{N^2}\sum_{i,j=1}^{N}\braket{\sigma_i\sigma_j}_{J_w} = 0   $
	under the assumption $\sum_{k=1}^{N}kw_k = o((\ln N)^{1/2})$, which shows the absence of long-range order.
	Note that under the stronger assumption  \cref{eq:condw}, \cref{prop:corbound} implies the  correlation estimate
	\begin{equation}
		\limsup_{N\to\infty}\frac{1}{N}\sum_{i,j=1}^{N}\braket{\sigma_i\sigma_j}_{J_w} <\infty ,
	\end{equation}
which is stronger.
\end{remark}
Let us begin with recalling some well-known inequalities on correlation functions in the Ising model, which  go back to Griffiths \cite{Griffiths.1967a,Griffiths.1967c} (later generalized in \cite{KellySherman.1968,Ginibre.1970} and referred to as the GKS (Griffiths-Kelly-Sherman) inequalities) and Thompson \cite{Thompson.1971}.
To that end, we write the symmetric set difference as $AB = A\cup B\setminus(A\cap B)$ for $A,B\subset \IZ$.
Further, if $\cA\subset\cP(\IZ)$, we
 define
\begin{equation}\label{def:cutising}
	\bL{\cdot}_{J;\cA} := \bL{\cdot}_{I_\cA} \quad\mbox{and}\quad \braket{\cdot}_{J;\cA} := \braket{\cdot}_{I_\cA}, \qquad\mbox{where}\  I_\cA(A) = \begin{cases}J(A) & \mbox{for}\ A\notin\cA,\\ 0 & \mbox{for}\ A\in \cA.\end{cases}
\end{equation}
\begin{lemma}\label{lem:standardbounds}
	Let $J:\cP(\IZ)\to[0,\infty)$ and assume $A,B\subset\IZ$, $L\in\IN$. Then the following holds.
	\begin{enumerate}[(i)]
		\item\label{part:G1} $\bL{\sigma_A}_{J}\ge 0$ {\em (Griffiths' first inequality)}
		\item\label{part:G2} $\bL{\sigma_{AB}}_{J} \ge \bL{\sigma_A}_{J}\bL{\sigma_B}_{J}$ {\em (Griffiths' second inequality)}
		\item\label{part:T1} $\bL{\sigma_A}_{J} \le \bL{\sigma_A}_{J;\{B\}}+\tanh(J(B))\bL{\sigma_{AB}}_{J;\{B\}}$ {\em (Griffiths' third inequality)}
	\item \label{part:simple} $\bL{\sigma_A}_{J;\{B\}}\le \bL{\sigma_A}_{J}$
		\item\label{part:T2} $\bL{\sigma_A}_{J} \le \tanh (J(B))\bL{\sigma_{AB}}_{J} + (1-\tanh^2(J(B)))\bL{\sigma_A}_{J;\{B\}}$
	\end{enumerate}
\end{lemma}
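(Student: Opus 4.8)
The plan is to establish the five inequalities in the stated order, arranging matters so that only part~(i) requires a genuinely combinatorial argument and only part~(ii) requires the duplication trick; parts (iii)--(v) then reduce to algebra. For part~(i), I would expand each factor $\exp(J(C)\sigma_C)$ of $\exp(-E_{J,L}(\sigma))$ in its Taylor series, use $\sigma_C^2=1$ to collapse every power to $1$ or $\sigma_C$, and multiply out. Since $J\ge 0$ all resulting coefficients are nonnegative, and each monomial is again of the form $\sigma_D$ for some $D\subset\Lambda_L$. Because $\sum_{\sigma\in\cS_L}\sigma_D$ equals $2^{|\Lambda_L|}$ if $D=\emptyset$ and vanishes otherwise, both the numerator $\sum_\sigma \sigma_A\exp(-E_{J,L}(\sigma))$ and the partition function $Z_{J,L}$ are nonnegative sums, and strict positivity of $Z_{J,L}$ makes the quotient well defined and $\ge 0$.

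For part~(ii) I would use the duplicated-system trick. Writing $\bL{\sigma_A}_J\bL{\sigma_B}_J$ and $\bL{\sigma_{AB}}_J=\bL{\sigma_A\sigma_B}_J$ as expectations over two independent copies $\sigma,\sigma'$, their difference equals $Z_{J,L}^{-2}\sum_{\sigma,\sigma'}\sigma_A(\sigma_B-\sigma'_B)\exp\!\big(\sum_C J(C)(\sigma_C+\sigma'_C)\big)$. Substituting $\sigma'_i=\sigma_i\tau_i$, which for fixed $\sigma$ is a bijection of $\cS_L$, turns the exponent into $\exp\!\big(\sum_C J(C)(1+\tau_C)\sigma_C\big)$ with effective couplings $J(C)(1+\tau_C)\ge 0$, while the prefactor becomes $\sigma_{AB}(1-\tau_B)$ with $1-\tau_B\ge 0$. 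Summing over $\sigma$ first for each fixed $\tau$ and applying part~(i) to the single system with these couplings shows every term of the $\tau$-sum is nonnegative, giving part~(ii).

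Next I would record a single-bond decoupling identity. With $b=J(B)$ and $e^{b\sigma_B}=\cosh(b)(1+\sigma_B\tanh b)$, separating out the $B$-factor from both numerator and denominator of $\bL{\sigma_A}_J$ yields
\[ \bL{\sigma_A}_J = \frac{\bL{\sigma_A}_{J;\{B\}}+\tanh(b)\,\bL{\sigma_{AB}}_{J;\{B\}}}{1+\tanh(b)\,\bL{\sigma_B}_{J;\{B\}}}. \]
Part~(iii) is then immediate, since the denominator is $\ge 1$ because $\tanh b\ge 0$ and $\bL{\sigma_B}_{J;\{B\}}\ge 0$ by part~(i). Part~(iv) follows by clearing the denominator and invoking Griffiths' second inequality in the decoupled system, $\bL{\sigma_{AB}}_{J;\{B\}}\ge\bL{\sigma_A}_{J;\{B\}}\bL{\sigma_B}_{J;\{B\}}$. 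For part~(v) I would apply the same identity both to $\sigma_A$ and to $\sigma_{AB}$ (using $(AB)B=A$), substitute into the claimed bound, clear the common denominator, and verify that after cancellation the inequality reduces to $0\le(1-\tanh^2 b)\tanh(b)\,\bL{\sigma_A}_{J;\{B\}}\bL{\sigma_B}_{J;\{B\}}$, which holds by part~(i) and $\tanh b\in[0,1)$.

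I expect the conceptual core to be Griffiths' second inequality, part~(ii); everything else is bookkeeping once parts~(i) and~(ii) are available. The delicate point there is the change of variables $\sigma'=\sigma\tau$, which must be checked to be a bijection of $\cS_L$ for fixed $\sigma$ and to produce the clean factorization $\sigma'_C=\sigma_C\tau_C$, so that the problem genuinely reduces to part~(i) with the nonnegative effective couplings $J(C)(1+\tau_C)$.
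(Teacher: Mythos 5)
Your proposal is correct, but it is worth noting that the paper does not actually prove this lemma at all: it disposes of parts (i)--(ii) by citing the main theorem of Kelly--Sherman and of parts (iii)--(v) by citing equations (3.1), (1.6) and (2.5) of Thompson, so any comparison is really with those references rather than with the paper. What you supply is essentially the classical chain of arguments that those references contain: the parity/Taylor expansion of $\exp(-E_{J,L}(\sigma))$ into monomials $\sigma_D$ with nonnegative coefficients for (i); Ginibre's duplicate-variable substitution $\sigma'_i=\sigma_i\tau_i$, reducing (ii) to (i) with the effective couplings $J(C)(1+\tau_C)\ge 0$; and the single-bond decoupling identity
\begin{equation*}
\bL{\sigma_A}_J=\frac{\bL{\sigma_A}_{J;\{B\}}+\tanh(J(B))\,\bL{\sigma_{AB}}_{J;\{B\}}}{1+\tanh(J(B))\,\bL{\sigma_B}_{J;\{B\}}},
\end{equation*}
which follows from $e^{J(B)\sigma_B}=\cosh(J(B))(1+\sigma_B\tanh(J(B)))$ and from which (iii), (iv) and (v) are indeed pure algebra combined with (i) and (ii). I checked the two reductions you only assert: clearing the denominator in (iv) leaves $\tanh(J(B))\bigl(\bL{\sigma_{AB}}_{J;\{B\}}-\bL{\sigma_A}_{J;\{B\}}\bL{\sigma_B}_{J;\{B\}}\bigr)\ge 0$, which is (ii) for the decoupled system, and with $t=\tanh(J(B))$ the inequality in (v) reduces, after multiplying through by $1+t\,\bL{\sigma_B}_{J;\{B\}}$ and cancelling, to $0\le(1-t^2)\,t\,\bL{\sigma_A}_{J;\{B\}}\bL{\sigma_B}_{J;\{B\}}$, exactly as you claim. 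The only caveat is implicit in the paper as well: the statements are meaningful as written only for $A,B\subset\Lambda_L$ (so that $\sigma_A$, $\sigma_{AB}$ are functions on $\cS_L$ and the bond $B$ actually occurs in $E_{J,L}$), and your decoupling identity uses $\sigma_B\in\{\pm1\}$, which holds under that reading. With that understanding, your argument is complete and self-contained, which is more than the paper offers.
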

\begin{proof} Parts
	\cref{part:G1,part:G2} follow from the main theorem in \cite{KellySherman.1968}.
	Parts \cref{part:T1}--\cref{part:T2} are shown in  \cite{Thompson.1971} in (3.1), (1.6),  and (2.5), respectively.
\end{proof}
\noindent
We will also utilize the  well-established simple fact  that expectations involving uncoupled Ising spins always vanish. This is the content of the next lemma.
\begin{lemma}\label{lem:zeroexp}
	Let $L\in\IN$, $i\in\Lambda_L$ and assume $J:\cP(\IZ)\to\IR$ satisfies $J(A)=0$ for all $A\subset\Lambda_L$ with $i \in A$.
	Then   $\bL{\sigma_B}_{J}=0$  for any $B \subset\Lambda_L$ with $i\in B$.
\end{lemma}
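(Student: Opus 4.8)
The plan is to exploit a spin-flip symmetry at the site $i$. Since $i$ participates in no interaction term, flipping $\sigma_i \mapsto -\sigma_i$ leaves the energy $E_{J,L}$ unchanged but reverses the sign of the observable $\sigma_B$ whenever $i \in B$; a change of variables in the sum defining the expectation then forces it to equal its own negative, hence to vanish.

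Concretely, I would introduce the involution $\tau \colon \cS_L \to \cS_L$ that negates only the $i$-th coordinate, i.e.\ $(\tau\sigma)_j = \sigma_j$ for $j \neq i$ and $(\tau\sigma)_i = -\sigma_i$. The first step is to verify that $E_{J,L} \circ \tau = E_{J,L}$. Writing $E_{J,L}(\sigma) = -\sum_{A \subset \Lambda_L} J(A)\sigma_A$, I would split the sum over subsets $A$ according to whether $i \in A$. For $A$ with $i \notin A$ one has $(\tau\sigma)_A = \sigma_A$, while for $A$ with $i \in A$ one has $(\tau\sigma)_A = -\sigma_A$; but these latter terms carry the coefficient $J(A) = 0$ by hypothesis and thus do not contribute. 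Hence every surviving term is unchanged and the energy is invariant.

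The second step records that the observable reverses sign: since $i \in B$, exactly one factor of $\sigma_B = \prod_{k \in B}\sigma_k$ is flipped, so $(\tau\sigma)_B = -\sigma_B$. Because $\tau$ is a bijection of the finite set $\cS_L$, reindexing the defining sum via $\sigma \mapsto \tau\sigma$ yields
\[
\sum_{\sigma \in \cS_L} \sigma_B\, e^{-E_{J,L}(\sigma)} = \sum_{\sigma \in \cS_L} (\tau\sigma)_B\, e^{-E_{J,L}(\tau\sigma)} = -\sum_{\sigma \in \cS_L} \sigma_B\, e^{-E_{J,L}(\sigma)},
\]
so this sum vanishes, and dividing by $Z_{J,L} > 0$ gives $\bL{\sigma_B}_J = 0$.

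I do not anticipate any real obstacle here; the only point requiring a moment's care is the bookkeeping in the energy-invariance step, namely that one must invoke the vanishing of $J(A)$ precisely on the subsets containing $i$ in order to discard the sign-changing contributions. The argument is the standard odd-observable against even-measure symmetry and uses only the finiteness of $\cS_L$ and the positivity of $Z_{J,L}$.
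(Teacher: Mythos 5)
Your proof is correct and is essentially identical to the paper's: the involution $\tau$ is exactly the map $\phi_i$ used there, and the energy-invariance plus sign-reversal argument matches step for step.
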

\begin{proof}
	We define $\phi_i:\cS_L\to\cS_L$ as $(\phi_i(\sigma))_k = - \sigma_k$, if $k=i$, and $(\phi_i(\sigma))_k =  \sigma_k$, if $k \neq i$.
	By the assumptions, it follows that $E_{J,L}(\phi_i(\sigma))=E_{J,L}(\sigma)$ for all $\sigma\in\cS_L$. Further,
	if $i\in B$, we have $\sigma_B \circ \phi_i =-\sigma_B$. Together, we obtain
	\[ \bL{\sigma_B}_{J}=\bL{\sigma_B \circ \phi_i}_{J}=\bL{-{\sigma_B}}_{J} = -\bL{\sigma_B}_{J}. \]
This implies the  claim.
\end{proof}
\noindent
The existence of the thermodynamic limit immediately follows from \cref{lem:standardbounds} and is well-known since \cite{Griffiths.1967b}.
\begin{cor}\label{cor:thermlim}
		Let $J:\cP(\IZ)\to[0,\infty)$ and assume $A\subset \IZ$.
		Then the thermodynamic limit $\braket{\sigma_A}_{J}$ exists.
\end{cor}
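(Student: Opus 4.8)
The plan is to show that, for all $L$ large enough that $A\subset\Lambda_L$, the sequence $L\mapsto\bL{\sigma_A}_J$ is non-decreasing and bounded, so that the limit $\braket{\sigma_A}_J$ exists by monotone convergence. (I may assume $A$ is finite, so that $\sigma_A$ is genuinely defined; otherwise there is nothing to prove.) Boundedness is immediate: $\sigma_A$ takes values in $\{-1,1\}$, hence $\bL{\sigma_A}_J\le 1$, while $\bL{\sigma_A}_J\ge 0$ by Griffiths' first inequality \cref{part:G1}. The substance of the argument is therefore the monotonicity in the volume.

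To compare two volumes $L\le\tilde L$, I would first reduce the volume comparison to a comparison of coupling strengths at the single fixed volume $\tilde L$. Define the interaction $I$ on $\cP(\IZ)$ by $I(B)=J(B)$ if $B\subset\Lambda_L$ and $I(B)=0$ otherwise. In the volume $\tilde L$ with interaction $I$ the spins in $\Lambda_{\tilde L}\setminus\Lambda_L$ are uncoupled, so both $E_{I,\tilde L}(\sigma)$ and $\sigma_A$ depend only on $\sigma|_{\Lambda_L}$; the summation over the free spins contributes the same factor $2^{|\Lambda_{\tilde L}\setminus\Lambda_L|}$ to numerator and denominator of the expectation, and cancels. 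Hence $\braket{\sigma_A}^{(\tilde L)}_I=\bL{\sigma_A}_J$. It then remains to prove $\braket{\sigma_A}^{(\tilde L)}_I\le\braket{\sigma_A}^{(\tilde L)}_J$, and since $0\le I(B)\le J(B)$ for every $B\subset\Lambda_{\tilde L}$ this is precisely monotonicity of correlations under an increase of ferromagnetic couplings.

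That monotonicity is where Griffiths' second inequality enters, and I expect it to be the only point requiring care. For a single coupling one computes, directly from the definition \cref{def:expIsing},
\[
\frac{\partial}{\partial J(B)}\bL{\sigma_A}_J=\bL{\sigma_A\sigma_B}_J-\bL{\sigma_A}_J\bL{\sigma_B}_J=\bL{\sigma_{AB}}_J-\bL{\sigma_A}_J\bL{\sigma_B}_J\ge 0,
\]
using $\sigma_A\sigma_B=\sigma_{AB}$ and \cref{part:G2}. As $\Lambda_{\tilde L}$ is finite there are only finitely many couplings to adjust, so interpolating linearly from $I$ to $J$ (which keeps all couplings nonnegative) and integrating this nonnegative derivative along the path yields $\braket{\sigma_A}^{(\tilde L)}_I\le\braket{\sigma_A}^{(\tilde L)}_J$. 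Combined with the factorization of the previous paragraph this gives $\bL{\sigma_A}_J\le\braket{\sigma_A}^{(\tilde L)}_J$ for all $\tilde L\ge L$, which is the desired monotonicity, and the corollary follows.
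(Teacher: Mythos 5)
Your proof is correct. It follows the paper's overall strategy exactly: $\bL{\sigma_A}_J$ is nonnegative by Griffiths' first inequality, bounded above by $1$, and non-decreasing in $L$, so the limit exists by monotone convergence. The only difference lies in how the volume-monotonicity is justified. The paper simply cites \cref{lem:standardbounds} \cref{part:simple}, i.e.\ $\bL{\sigma_A}_{J;\{B\}}\le \bL{\sigma_A}_{J}$, iterated over the couplings that are switched on when passing from $\Lambda_L$ to $\Lambda_{\tilde L}$ (together with the same cancellation of uncoupled spins that you spell out). You instead bypass \cref{part:simple} entirely and re-derive the monotonicity in the couplings from Griffiths' second inequality \cref{part:G2}, by differentiating $\bL{\sigma_A}_J$ with respect to a single coupling $J(B)$ and integrating along a linear interpolation from the truncated interaction $I$ to $J$. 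This is the standard proof of \cref{part:simple} itself, so your argument is a self-contained expansion of the step the paper delegates to Thompson's inequality (1.6); it buys independence from that citation at the cost of a longer write-up, and both arguments rest ultimately on the same GKS machinery. Your explicit treatment of the reduction $\braket{\sigma_A}^{(\tilde L)}_I=\bL{\sigma_A}_J$ via cancellation of the free spins is a detail the paper leaves implicit, and it is handled correctly.
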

\begin{proof}
By \cref{lem:standardbounds}, the expectation  $\bL{\sigma_A}_{J}$ is nonnegative (Part \eqref{part:G1}), increasing in $L$  (Part  \eqref{part:simple}), and bounded above by 1. Thus  the statement follows by monotone convergence.
\end{proof}
\noindent
The major ingredient of the proof of \cref{prop:corbound} is the following correlation bound for  finite Ising spin chains.
\begin{lemma}\label{lem:isingboundsfixedL}
	Let $L\in\IN$ and $w=(w_k)_{k\in\IN}\subset[0,\infty)$. We set $\tau_k = \tanh(w_k)$.\\
	If $i,j\in\Lambda_L$ with $i \lessgtr j$, we have
	\[
	\bL{\sigma_i\sigma_j}_{J_w}
	\le \tau_1\bL{\sigma_i\sigma_{j \mp 1}}_{J_w} + \sum_{l=2}^{\infty}\sum_{s=\pm 1}\tau_{l}\bL{\sigma_i\sigma_{j+sl}}_{J_w} + (1-\tau_1^2)\sum_{b=1}^{\infty}\tau_1^b\sum_{l=2}^{\infty}\sum_{s=\pm 1} \tau_{l}\bL{\sigma_i\sigma_{j\pm b+sl}}_{J_w} ,
	\]
where we use the convention that $\bL{\sigma_l\sigma_{k}}_J = 0$ if $l $ or $k$ is not an element of $\Lambda_L$.
\end{lemma}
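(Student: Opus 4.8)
The plan is to fix $L$ and, using the reflection symmetry $k\mapsto -k$ of the pair interaction $J_w$ (which depends only on $|i-j|$), to assume without loss of generality that $i<j$, so that $j\mp1=j-1$ and $j\pm b=j+b$; the case $i>j$ then follows from the mirrored argument. Throughout I abbreviate $G(m):=\bL{\sigma_i\sigma_m}_{J_w}$ and $H_m:=\sum_{l=2}^{\infty}\sum_{s=\pm1}\tau_l\,\bL{\sigma_i\sigma_{m+sl}}_{J_w}$, and I write $\bL{\cdot}_{\tilde J_m}:=\bL{\cdot}_{J_w;\{\{m-1,m\}\}}$ for the model in which only the nearest-neighbour bond $\{m-1,m\}$ pointing towards $i$ has been deleted. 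Since all correlations of spins leaving $\Lambda_L$ vanish by convention, the displayed sums are finite and the claim reads $G(j)\le\tau_1 G(j-1)+H_j+(1-\tau_1^2)\sum_{b\ge1}\tau_1^b H_{j+b}$.

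First I would split off the nearest-neighbour bond towards $i$. Applying the Thompson-type inequality \cref{part:T2} with $A=\{i,j\}$ and $B=\{j-1,j\}$ (so that $\tanh(J_w(B))=\tau_1$ and $\sigma_{AB}=\sigma_i\sigma_{j-1}$) yields
\[
G(j)\ \le\ \tau_1\,\bL{\sigma_i\sigma_{j-1}}_{J_w}+(1-\tau_1^2)\,\bL{\sigma_i\sigma_j}_{\tilde J_j}\ =\ \tau_1 G(j-1)+(1-\tau_1^2)\,\bL{\sigma_i\sigma_j}_{\tilde J_j},
\]
which already produces the first summand. It then suffices to prove the geometric estimate $\bL{\sigma_i\sigma_j}_{\tilde J_j}\le\sum_{b\ge0}\tau_1^b H_{j+b}$: distributing the prefactor and using $(1-\tau_1^2)H_j\le H_j$ (valid since $H_j\ge0$ by \cref{part:G1} and $\tau_1<1$) converts the residual into exactly $H_j+(1-\tau_1^2)\sum_{b\ge1}\tau_1^b H_{j+b}$, as needed.

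The heart of the argument is a rightward march estimate, which I would establish for every site $m$ in the form
\[
\bL{\sigma_i\sigma_m}_{\tilde J_m}\ \le\ H_m+\tau_1\,\bL{\sigma_i\sigma_{m+1}}_{\tilde J_{m+1}}.
\]
To prove this I peel off, one after another, all bonds incident to the site $m$ in the model $\tilde J_m$ by iterating Griffiths' third inequality \cref{part:T1}, deferring the remaining nearest-neighbour bond $\{m,m+1\}$ to the very last step. Each long-range bond $\{m,m+sl\}$ with $l\ge2$ contributes a term $\tau_l\,\bL{\sigma_i\sigma_{m+sl}}_{\,\cdot\,}$ in a model obtained from $J_w$ by deleting bonds, which by \cref{part:simple} is bounded by $\tau_l G(m+sl)$; summing these gives $H_m$. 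Peeling $\{m,m+1\}$ last produces $\tau_1\,\bL{\sigma_i\sigma_{m+1}}_{\hat J_m}$, where $\hat J_m$ is $J_w$ with the site $m$ completely decoupled, while the final leftover correlation $\bL{\sigma_i\sigma_m}_{\hat J_m}$ vanishes by \cref{lem:zeroexp}, since $m$ is uncoupled in $\hat J_m$ and still appears in $\{i,m\}$. Because $\hat J_m$ is obtained from $\tilde J_{m+1}$ by deleting only further bonds at $m$, iterating \cref{part:simple} gives $\bL{\sigma_i\sigma_{m+1}}_{\hat J_m}\le\bL{\sigma_i\sigma_{m+1}}_{\tilde J_{m+1}}$, which closes the march estimate. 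Iterating it from $m=j$, and using that each term vanishes once its index exceeds $L$, yields $\bL{\sigma_i\sigma_j}_{\tilde J_j}\le\sum_{b\ge0}\tau_1^b H_{j+b}$.

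I expect the main obstacle to be the bookkeeping of the auxiliary interactions in the march, and in particular getting the order of peeling right: the nearest-neighbour bond $\{m,m+1\}$ must be removed only after all long-range bonds at $m$, so that the surviving correlation sits in the fully decoupled model $\hat J_m$ and can be compared monotonically with $\tilde J_{m+1}$. This is precisely what forces the march to proceed strictly away from $i$ and prevents any spurious nearest-neighbour jump back towards $i$, which would otherwise violate the $l\ge2$ structure of the last two summands. A secondary subtlety is the placement of the factor $1-\tau_1^2$, which must attach to the geometric tail alone; this is handled cleanly by the elementary slack $(1-\tau_1^2)H_j\le H_j$ once the march estimate is in hand.
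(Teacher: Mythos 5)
Your proposal is correct and follows essentially the same route as the paper: peel the long-range bonds at the current site with Griffiths' third inequality, remove the nearest-neighbour bond towards $i$ via \cref{part:T2}, kill the fully decoupled correlation with \cref{lem:zeroexp}, and march one site away from $i$ at a time, each step contributing a factor $\tau_1$. The only differences are cosmetic — you apply \cref{part:T2} before rather than after stripping $O_j$, and you absorb the resulting extra factor via $(1-\tau_1^2)H_j\le H_j$ — so the argument matches the paper's proof in all essentials.
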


\begin{proof}
	The philosophy of our proof is sketched in \cref{fig}.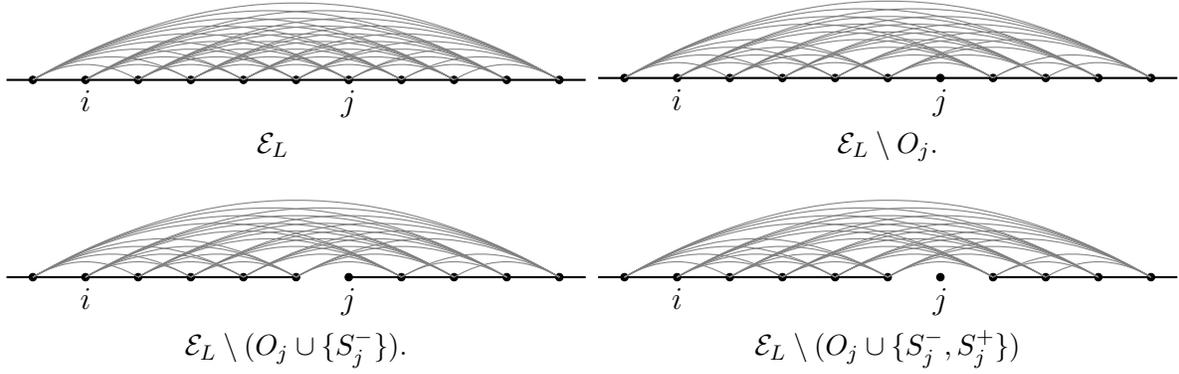
\begin{figure}
	\begin{center}
		\begin{tikzpicture}[scale=.7]
			\draw[black, thick] (-0.5,0) -- (10.5,0);
			\foreach \x in {0, 1,2,3,4,5,6,7,8,9,10} 
			\filldraw[black] (\x,0) circle (2pt); 
			\draw (1,0)  node[anchor=north] {$i$};
			\draw (6,0)  node[anchor=north] {$j$};
			\foreach \x in {0, 1,2,3,4,5,6,7,8} 
			\draw[gray] (\x,0) to[bend left] (\x+2,0);
			\foreach \x in {0, 1,2,3,4,5,6,7} 
			\draw[gray] (\x,0) to[bend left] (\x+3,0);
			\foreach \x in {0, 1,2,3,4,5,6} 
			\draw[gray] (\x,0) to[bend left] (\x+4,0);
			\foreach \x in {0, 1,2,3,4,5} 
			\draw[gray] (\x,0) to[bend left] (\x+5,0);
			\foreach \x in {0, 1,2,3,4} 
			\draw[gray] (\x,0) to[bend left] (\x+6,0);
			\foreach \x in {0, 1,2,3} 
			\draw[gray] (\x,0) to[bend left] (\x+7,0);
			\foreach \x in {0, 1,2} 
			\draw[gray] (\x,0) to[bend left] (\x+8,0);
			\foreach \x in {0, 1} 
			\draw[gray] (\x,0) to[bend left] (\x+9,0);
			\foreach \x in {0} 
			\draw[gray] (\x,0) to[bend left] (\x+10,0);
			\draw (5,-0.8)  node[anchor=north] {\small $\mathcal{E}_L   \phantom{;O_j}$};
		\end{tikzpicture}
		\begin{tikzpicture}[scale=.7]
			\draw[black, thick] (-0.5,0) -- (10.5,0);
			\foreach \x in {0, 1,2,3,4,5,6,7,8,9,10} 
			\filldraw[black] (\x,0) circle (2pt); 
			\draw (1,0)  node[anchor=north] {$i$};
			\draw (6,0)  node[anchor=north] {$j$};
			\foreach \x in {0, 1,2,3,5,7,8} 
			\draw[gray] (\x,0) to[bend left] (\x+2,0);
			\foreach \x in {0, 1,2,4,5,7} 
			\draw[gray] (\x,0) to[bend left] (\x+3,0);
			\foreach \x in {0, 1,3,4,5} 
			\draw[gray] (\x,0) to[bend left] (\x+4,0);
			\foreach \x in {0,2,3,4,5} 
			\draw[gray] (\x,0) to[bend left] (\x+5,0);
			\foreach \x in { 1,2,3,4} 
			\draw[gray] (\x,0) to[bend left] (\x+6,0);
			\foreach \x in {0, 1,2,3} 
			\draw[gray] (\x,0) to[bend left] (\x+7,0);
			\foreach \x in {0, 1,2} 
			\draw[gray] (\x,0) to[bend left] (\x+8,0);
			\foreach \x in {0, 1} 
			\draw[gray] (\x,0) to[bend left] (\x+9,0);
			\foreach \x in {0} 
			\draw[gray] (\x,0) to[bend left] (\x+10,0);
			\draw (5,-0.8)  node[anchor=north] {\small   $\mathcal{E}_L \setminus O_j$.};
		\end{tikzpicture}
	\begin{tikzpicture}[scale=.7]
		\draw[black, thick] (-0.5,0) -- (5,0);
		\draw[black, thick] (6,0) -- (10.5,0);
		\foreach \x in {0, 1,2,3,4,5,6,7,8,9,10} 
		\filldraw[black] (\x,0) circle (2pt); 
		\draw (1,0)  node[anchor=north] {$i$};
		\draw (6,0)  node[anchor=north] {$j$};
		\foreach \x in {0, 1,2,3,5,7,8} 
		\draw[gray] (\x,0) to[bend left] (\x+2,0);
		\foreach \x in {0, 1,2,4,5,7} 
		\draw[gray] (\x,0) to[bend left] (\x+3,0);
		\foreach \x in {0, 1,3,4,5} 
		\draw[gray] (\x,0) to[bend left] (\x+4,0);
		\foreach \x in {0,2,3,4,5} 
		\draw[gray] (\x,0) to[bend left] (\x+5,0);
		\foreach \x in { 1,2,3,4} 
		\draw[gray] (\x,0) to[bend left] (\x+6,0);
		\foreach \x in {0, 1,2,3} 
		\draw[gray] (\x,0) to[bend left] (\x+7,0);
		\foreach \x in {0, 1,2} 
		\draw[gray] (\x,0) to[bend left] (\x+8,0);
		\foreach \x in {0, 1} 
		\draw[gray] (\x,0) to[bend left] (\x+9,0);
		\foreach \x in {0} 
		\draw[gray] (\x,0) to[bend left] (\x+10,0);
		\draw (5,-0.8)  node[anchor=north] {\small     $\mathcal{E}_L \setminus (  O_j \cup \{ S_j^{-} \} ) $.};
	\end{tikzpicture}
	\begin{tikzpicture}[scale=.7]
		\draw[black, thick] (-0.5,0) -- (5,0);
		\draw[black, thick] (7,0) -- (10.5,0);
		\foreach \x in {0, 1,2,3,4,5,6,7,8,9,10} 
		\filldraw[black] (\x,0) circle (2pt); 
		\draw (1,0)  node[anchor=north] {$i$};
		\draw (6,0)  node[anchor=north] {$j$};
		\foreach \x in {0, 1,2,3,5,7,8} 
		\draw[gray] (\x,0) to[bend left] (\x+2,0);
		\foreach \x in {0, 1,2,4,5,7} 
		\draw[gray] (\x,0) to[bend left] (\x+3,0);
		\foreach \x in {0, 1,3,4,5} 
		\draw[gray] (\x,0) to[bend left] (\x+4,0);
		\foreach \x in {0,2,3,4,5} 
		\draw[gray] (\x,0) to[bend left] (\x+5,0);
		\foreach \x in { 1,2,3,4} 
		\draw[gray] (\x,0) to[bend left] (\x+6,0);
		\foreach \x in {0, 1,2,3} 
		\draw[gray] (\x,0) to[bend left] (\x+7,0);
		\foreach \x in {0, 1,2} 
		\draw[gray] (\x,0) to[bend left] (\x+8,0);
		\foreach \x in {0, 1} 
		\draw[gray] (\x,0) to[bend left] (\x+9,0);
		\foreach \x in {0} 
		\draw[gray] (\x,0) to[bend left] (\x+10,0);
		\draw (5,-0.8)  node[anchor=north] {\small    $\mathcal{E}_L \setminus ( O_j \cup \{ S_j^-,S_j^+\} )$     };
	\end{tikzpicture}
	\end{center}
	\caption{Illustration of the set $\mathcal{E}_L$, consisting of all  edges with vertices in $\Lambda_L$,  without the edges of the indicated sets.}\label{fig}
\end{figure}
	We use the estimates in \cref{lem:standardbounds} to reduce the number of interaction edges, in which $j$ contributes.
	To that end, for $j\in\Lambda_L$, we define the sets
	\[ S_j^\pm = \{j,j\pm1\} \quad \mbox{and}\quad O_j = \{\{j,k\}:k\in\IZ\setminus\{j,j-1,j+1\}\}.  \]
	Note that $S_j^\pm$ contain the nearest neighbors of $j$, while $O_j$ are all long-range pairs involving $j$.
	Throughout this proof, we drop the superscript ${(L)}$ and the subscript ${J_w}$ of expectation values. Moreover we  assume $i<j$. The statement in the case $i>j$ can be treated completely analogous.
	
	By twice applying \cref{lem:standardbounds} \cref{part:T1}, we obtain
	\begin{align*}
		\braket{\sigma_i\sigma_j} &\le \braket{\sigma_i\sigma_j}_{;\{\{j,j-2\}\}} + \tau_2 \braket{\sigma_i\sigma_{j-2}}_{;\{\{j,j-2\}\}} \\&\le \braket{\sigma_i\sigma_j}_{;\{\{j,j-2\},\{j,j+2\}\}}
+ \tau_2 \braket{\sigma_i\sigma_{j+2}}_{;\{\{j,j-2\},\{j,j+2\}\}}
+ \tau_2 \braket{\sigma_i\sigma_{j-2}}_{;\{\{j,j-2\}\}} .
	\end{align*}
	Combined with \cref{lem:standardbounds} \cref{part:simple}, this implies
	\[ \braket{\sigma_i\sigma_j} \le \tau_2\left(\braket{\sigma_i\sigma_{j-2}}+\braket{\sigma_i\sigma_{j+2}}\right) + \braket{\sigma_i\sigma_j}_{;\{\{j,j-2\},\{j,j+2)\}\}}. \]
	Iterating this argument, we arrive at
	\begin{equation}\label{eq:firstiteration}
		\braket{\sigma_i\sigma_j} \le \sum_{l=2}^{\infty}\tau_l \sum_{s=\pm}\braket{\sigma_i\sigma_{j+sl}} + \braket{\sigma_i\sigma_j}_{;O_j}.
	\end{equation}
	Then, \cref{lem:standardbounds} \cref{part:T2} yields
	\begin{equation}\label{eq:stepafterfirstit}
		\braket{\sigma_i\sigma_j}_{;O_j}\le \tau_1\braket{\sigma_i\sigma_{j-1}}_{;O_j}+(1-\tau_1^2)\braket{\sigma_i\sigma_j}_{;O_j \cup \{S_j^-\}}.
	\end{equation}
	The second term on the right hand side can be estimated by \cref{lem:standardbounds} \cref{part:T1} and \cref{lem:zeroexp}
	\begin{equation}\label{eq:firstzero}
		\braket{\sigma_i\sigma_j}_{;O_j \cup \{ S_j^-\}} \le \underbrace{\braket{\sigma_i\sigma_j}_{;O_j \cup \{ S_j^-,S_j^+\}}}_{=0} + \tau_1\braket{\sigma_i\sigma_{j+1}}_{;O_j \cup \{S_j^-,S_j^+\}}.
	\end{equation}
	Now applying \cref{eq:firstiteration} with $j$ replaced by $j+1$ and using \cref{lem:standardbounds} \cref{part:simple}, we obtain
	\begin{equation}\label{eq:inserthere}
		\braket{\sigma_i\sigma_{j+1}}_{;O_j \cup \{ S_j^-,S_j^+\}} \le \sum_{l=2}^\infty\sum_{s=\pm 1 }\tau_{l}\braket{\sigma_i\sigma_{j+1+sl}} + \braket{\sigma_i\sigma_{j+1}}_{;O_j  \cup  O_{j+1}\cup \{ S_j^-,S_j^+\}}.
	\end{equation}
	As in \cref{eq:firstzero}, we use \cref{lem:standardbounds} \cref{part:T1} and \cref{lem:zeroexp}, which yield
	\begin{equation}\label{eq:itlast}
		\braket{\sigma_i\sigma_{j+1}}_{;O_j \cup O_{j+1} \cup \{ S_j^-,S_j^+\}} \le \underbrace{\braket{\sigma_i\sigma_{j+1}}_{;O_j \cup O_{j+1} \cup \{  S_j^-,S_j^+, S_{j+1}^+\}}}_{=0} + \tau_1\braket{\sigma_i\sigma_{j+1}}_{;O_j\cup O_{j+1}  \cup \{ S_j^-,S_j^+ ,  S_{j+1}^+\}}.
	\end{equation}
	Note that we hereby used $S_j^+=S_{j+1}^-$.
	We now insert \cref{eq:itlast} into \cref{eq:inserthere} and iterate the same arguments. As a result
	\begin{equation}\label{eq:seconditerateresult}
		\braket{\sigma_i\sigma_{j+1}}_{;O_j \cup \{S_j^-,S_j^+\}} \le \sum_{b=1}^{\infty} \sum_{l=2}^\infty \sum_{s=\pm 1}\tau_1^{b-1}\tau_l\braket{\sigma_i\sigma_{j+b+sl}}.
	\end{equation}
	The statement now follows by combining \cref{eq:firstiteration,,eq:stepafterfirstit,,eq:firstzero,,eq:seconditerateresult}.
\end{proof}
\noindent
We use the previous lemma to prove the central result of this section.
\begin{proof}[\textbf{Proof of \cref{prop:corbound}}]
	For the proof of the statement, we will use the estimate from \namecref{lem:isingboundsfixedL} \labelcref{lem:isingboundsfixedL}.
We  need  to take  the limit $L \to \infty$ and  sum  over all $i\in\IZ$ .
To show finiteness we will make use of translation invariance of the model.
Let us first assume that  $w\in\ell^1(\IN)$ with   $w   \geq 0$ has compact support and let  $K > 0$ be such that \begin{equation} \label{eq:comp}
w_k = 0   , \quad k \geq K .
\end{equation}
As in  \cref{lem:isingboundsfixedL} we shall use the  notation $\tau_k = \tanh(w_k)$.
 We introduce a regularization parameter $\eta>0$ and define
	\begin{equation}\label{eq:reg}
		\tau_{k,\eta}=e^{\eta k}\tau_k \quad \mbox{and}\quad \bLe{\sigma_i\sigma_j}_{J_w}=e^{-\eta|i-j|}\bL{\sigma_i\sigma_j}_{J_w}.
	\end{equation}
	Further, we define
	\begin{align*}
		&M_{j,L}^-(\eta) = \sum_{i=-L}^{j-1}\bLe{\sigma_i\sigma_j}_{J_w},
		\quad
		M_{j,L}^+(\eta) = \sum_{i=j+1}^{L}\bLe{\sigma_i\sigma_j}_{J_w},
		\quad\mbox{and}
		\\
		& M_{j,L}(\eta) = \sum_{i=-L}^{L}\bLe{\sigma_i\sigma_j}_{J_w} = 1+ M_{j,L}^+(\eta)+M_{j,L}^-(\eta).
	\end{align*}
	By the regularization \cref{eq:reg} and \cref{cor:thermlim}, the limits
	$$ M_j^\pm(\eta) = \lim_{L\to\infty}M_{j,L}^\pm(\eta) = \sum_{i\gtrless j}\bLe{\sigma_i\sigma_j}_{J_w} \quad\mbox{and}\quad M_j(\eta) = \lim_{L\to\infty}M_{j,L}(\eta) = \sum_{i\in\IZ}\bLe{\sigma_i\sigma_j}_{J_w} $$
	exist for any $\eta>0$. By  translation invariance of $J_{w}$, i.e., $\braket{\sigma_i\sigma_j}_{J_{w}}=\braket{\sigma_{i+k}\sigma_{j+k}}_{J_{w}}$ for any $k\in\IZ$, it
follows that  $M_{j}(\eta)$ and $M_{j}^\pm(\eta)$ are independent of $j$ and we shall write  $M(\eta)$ for $M_j(\eta)$.

	For $L\in\IN$, we now multiply the inequalities in \cref{lem:isingboundsfixedL} with $e^{-\eta|i-j|}$ and use the triangle inequality, to obtain for  $ i \lessgtr  j$
	\begin{align*}
		 \bLe{\sigma_i\sigma_j}_{J_w}
		\le & \tau_{1}\bLe{\sigma_i\sigma_{j\mp1}}_{J_w} + \sum_{l=2}^{\infty}\sum_{s=\pm 1}\tau_{l,\eta}\bLe{\sigma_i\sigma_{j+sl}}_{J_w} \\&\qquad+ (1-\tau_1^2)\sum_{b=1}^{\infty}\tau_{1,\eta}^b\sum_{l=2}^{\infty}\sum_{s=\pm 1} \tau_{l,\eta}\bLe{\sigma_i\sigma_{j\pm b+sl}}_{J_w} .
	\end{align*}
	Adding the above expression for the cases $i > j$ and $j > i$, summing  over all $i\in\Lambda_L$, using $\sigma_r^2  = 1$ for any $r \in \Z$  as well as \cref{lem:standardbounds} \cref{part:G1}, we find
	\begin{equation}\label{eq:boundforfiniteL}
		\begin{aligned}
			M_{j,L}(\eta)\le &  1+ \tau_{1}\left( M_{j-1,L}^-(\eta) + 2 + M_{j+1,L}^+(\eta) \right)
			 + \sum_{l=2}^{K}\tau_{l,\eta}\sum_{s=\pm 1}M_{j+sl,L}(\eta) \\
			& + \sum_{b=1}^{\infty}\tau_{1,\eta}^b(1-\tau_1^2)\sum_{l=2}^{K}\tau_{l,\eta}\sum_{s=\pm 1}\left(M_{j+b+sl,L}(\eta)+M_{j-b+sl,L}(\eta)\right).
		\end{aligned}
	\end{equation}
Now we can take the limit $L \to \infty$. Since $\tau$ has  compact support and $\eta > 0$, expressions on the right hand side stay finite.
	Then, using the translation invariance of $J_{w}$ we can   drop the index $j$, and summing the geometric series $\sum_{b\in\IN}\tau_{1,\eta}^b$, we obtain
	\begin{equation}\label{eq:MKfinish}
		M(\eta) \le 1 + \tau_1 + M(\eta) \left(\tau_1 +  2 \sum_{l=2}^{K}\tau_{l,\eta}\left(1+2\frac{1-\tau_1^2}{1-\tau_{1,\eta}}\right)\right).
	\end{equation}
	Fix $D>1$, such that $\eps\in(0,(10D)^{-1})$.
	Since
	\begin{equation}\label{eq:simplebound}
		1 < \frac{1-\tau_1^2}{1-\tau_1}=1+\tau_1 <2,
	\end{equation}
	we can choose $\eta_0>0$, such that 
	$\frac{1-\tau_1^2}{1-\tau_{1,\eta_0}}<2$ and $e^{K\eta_0}<D$. 
	Then, for any $\eta\in(0,\eta_0)$  we obtain
	\begin{equation}\label{eq:tausum}
		\sum_{l=2}^{K}\tau_{l,\eta} \le D\sum_{l=2}^{K}\tau_l 
	\end{equation}
	and
	\begin{equation}\label{eq:prefactor}
		\tau_{1} + 2 \sum_{l=2}^{K}\tau_{l,\eta}\left(1+2\frac{1-\tau_1^2}{1-\tau_{1,\eta}}\right) < \tau_1 + 10 D  \sum_{l=2}^{\infty}\tau_l  .
	\end{equation}
	If
\begin{equation}  \label{eq:ineqmainassump}
\sum_{l=2}^{\infty}\tau_l   \le \eps (1-\tau_1) ,
\end{equation}
the right hand side of
	\cref{eq:prefactor} is smaller than 1, and  we can bring $M(\eta)$ in  \cref{eq:MKfinish}  to the left hand
side. Thus using \cref{eq:simplebound,eq:prefactor}, we find
	\[  M (\eta) \le \frac{1+\tau_1}{\displaystyle 1-\tau_1 - 2  \sum_{l=2}^{K}\tau_{l,\eta}\left(1+2\frac{1-\tau_1^2}{1-\tau_{1,\eta}}\right)} \le
	\frac{2}{\displaystyle 1-\tau_1- 10   D   \sum_{l=2}^{\infty}\tau_l   } \le
	\frac{2}{1-  10 D\eps}\frac{1}{1-\tau_1}.\]
	By monotone convergence, the limit $\eta\downarrow0$ exists and
	\begin{equation}  \sum_{i\in \IZ}\braket{\sigma_i\sigma_j}_{J_{w}} = \lim_{\eta\downarrow 0} M(\eta) \le  \frac{2}{1-  10 D\eps}\frac{1}{1-\tau_1}. \label{eq:mainineqprop41}  \end{equation}
Thus, we have proven \cref{eq:mainineqprop41}  for all nonnegative  $w \in \ell^1(\IN)$  satisfying  \cref{eq:comp,eq:ineqmainassump}.

Finally, let us consider general   $w\in\ell^1(\IN)$ with   $w   \geq 0$ satisfying only \cref{eq:ineqmainassump}.  If  $i, j \in \Lambda_L$, then as an immediate consequence
of the definition \cref{def:expIsing}
$$
 \braket{\sigma_i\sigma_j}_{J_{w}}^{(L)} =
\braket{\sigma_i\sigma_j}_{J_{w 1_{[0,2L+1]}}}^{(L)} .
$$
Since $w 1_{[0,2L+1]}$ trivially satisfies \eqref{eq:ineqmainassump} because $w$ does,
 we find from \eqref{eq:mainineqprop41} and monotonicity (\cref{lem:standardbounds} \cref{part:T1} ) the estimate for all $N \in \N$ 
$$
\sum_{i = - N}^N \braket{\sigma_i\sigma_j}_{J_{w}}^{(L)} 
 \leq
	\frac{2}{1-  10  D\eps}\frac{1}{1-\tau_1}.
$$
Thus the bound \eqref{eq:mainboundpropcorbound}  of the proposition now follows by taking in the above  inequality  first the limit $L\to \infty$ and then $N \to \infty$.
\end{proof}

\section{The Continuum Limit of the Ising Model}\label{sec:continuumlimit}

In this section we prove that the jump process $X$ defined in \cref{def:X} is the continuum limit of a one-dimensional Ising model defined as in the previous \lcnamecref{sec:cor}. The approach we use is based on the description in \cite{SpohnDuemcke.1985,Spohn.1989}.
To that end, we use a parameter $\delta\in(0,\infty)$ as lattice spacing of the discrete Ising model and define the map
\begin{equation}\label{def:id}
	\id : \IR\to \IN, \qquad t\mapsto \left\lfloor\frac{t}{\delta}+\frac 12\right\rfloor,
\end{equation}
where $\lfloor\cdot\rfloor$ as usually denotes the integer part. Note, the interval $[-T,T]$ is mapped to the lattice $\Lambda_{\Ld(T)}$ with $\Ld(T)=\id(T)$.
We set the nearest neighbor interaction on this lattice to be
\begin{equation}\label{def:jd}
	\jd = - \frac 12\ln(\delta).
\end{equation}
For a function $W:\IR\to\IR$, we define the corresponding pair interaction (cf. \cref{def:pairinteraction}) on the lattice as $\wde=(\wde_k)_{k\in\IN}$ with
\begin{equation}\label{def:wd}
	\wde_k = \delta^2 W(\delta k).
\end{equation}
We define the expectation values in the Ising model given with these interactions as
\begin{equation}
	\bn{\cdot}_{\delta,T} := \braket{\cdot}^{(\Ld(T))}_{J_{(\jd,0,\ldots)}    }
	\qquad\mbox{and}\qquad
	\bw{\cdot}_{\delta,T} := \braket{\cdot}^{(\Ld(T))}_{J_{(\jd,0,\ldots)  +\wde}}.
\end{equation}
In this section we prove the following proposition.
\begin{proposition}\label{prop:isinglimit}
	Assume $W:\IR\to\IR$ is even and continuous, $T>0$ and
	 $-T\le t_1\le \cdots\le t_N\le T$. Then
	\begin{equation*}
		\begin{aligned}
			\lim_{\delta\downarrow0}&\bw{\sd{t_1}\cdots \sd{t_N}}_{\delta,T}
			= \frac{1}{\cZ(W,T)}\EE\left[X(t_1)\cdots X(t_N)\exp\left(\int_{-T}^T\int_{-T}^T W(t-s)X(t)X(s) dt ds\right)\right].
		\end{aligned}
	\end{equation*}
\end{proposition}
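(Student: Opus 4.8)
The plan is to reduce the claim to two ingredients and combine them through the factorization
\[
	\bw{\sd{t_1}\cdots\sd{t_N}}_{\delta,T} = \frac{\bn{\sd{t_1}\cdots\sd{t_N}\,e^{V_\delta}}_{\delta,T}}{\bn{e^{V_\delta}}_{\delta,T}},
	\qquad
	V_\delta(\sigma) := \sum_{\{i,j\}\subset\Lambda_{\Ld(T)}}\wde_{|i-j|}\,\sigma_i\sigma_j,
\]
which isolates the reference nearest-neighbor model with coupling $\jd$ from the remaining interaction. The first ingredient is that the nearest-neighbor expectations $\bn{\cdot}_{\delta,T}$ converge, as $\delta\downarrow0$, to expectation values of the jump process $X$ on $[-T,T]$. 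The second is that the factor $e^{V_\delta}$ can be reinstated through a power-series expansion whose terms are controlled uniformly in $\delta$, with $V_\delta$ converging to the double integral appearing in the exponent.

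For the first ingredient I would exploit the explicit structure of the free-boundary nearest-neighbor chain. Writing the bond variables $\xi_k=\sigma_k\sigma_{k+1}$, under $\bn{\cdot}_{\delta,T}$ the $\xi_k$ are independent with $\bn{\xi_k}_{\delta,T}=\tanh\jd=\frac{1-\delta}{1+\delta}$, while the leftmost spin is an independent fair sign. Hence every multi-point function factorizes: for $i_1\le\cdots\le i_N$ one gets $\bn{\sigma_{i_1}\cdots\sigma_{i_N}}_{\delta,T}=0$ when $N$ is odd and $\prod_m(\tanh\jd)^{i_{2m}-i_{2m-1}}$ when $N$ is even. The same product structure holds for $X$: since $X(t)X(s)=(-1)^{N(t)-N(s)}$ records the parity of the number of jumps in $(s,t]$ for $s\le t$, and $\EE[(-1)^{\mathrm{Poisson}(\lambda)}]=e^{-2\lambda}$, one finds $\EE[X(t_1)\cdots X(t_N)]=0$ for odd $N$ and $\prod_m e^{-2(t_{2m}-t_{2m-1})}$ for even $N$. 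Because $\delta\,\id(t)\to t$ and $(\tanh\jd)^{1/\delta}\to e^{-2}$ (using $\jd=-\tfrac12\ln\delta$), the Ising product converges termwise to the $X$-product. As the joint law of finitely many $\pm1$-valued variables is determined by all its correlations, this yields convergence of the finite-dimensional distributions of $(\sd{t_1},\dots,\sd{t_N})$ to those of $(X(t_1),\dots,X(t_N))$.

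For the second ingredient the key point is a uniform bound: since $W$ is continuous, $|V_\delta(\sigma)|\le\sum_{\{i,j\}}\delta^2|W(\delta|i-j|)|$ is a Riemann sum for a finite integral, so there is $C=C(W,T)$ with $|V_\delta(\sigma)|\le C$ for all $\sigma$ and all small $\delta$. Expanding $e^{V_\delta}=\sum_{n\ge0}\tfrac1{n!}V_\delta^n$, the summands are dominated by $C^n/n!$ uniformly, so limit and sum may be interchanged by dominated convergence. Each $V_\delta^n$ is a finite linear combination, with weights $\prod_k\wde_{|i_k-j_k|}$, of even products of spins whose nearest-neighbor correlations are bounded by $1$ and converge, by the first ingredient, to the matching $X$-correlations; the weighted sums are Riemann sums converging to the corresponding iterated integrals of $W$. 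Passing to the limit in each term gives $\bn{\sd{t_1}\cdots\sd{t_N}V_\delta^n}_{\delta,T}\to\EE[X(t_1)\cdots X(t_N)\,V_\infty^n]$ with $V_\infty:=\lim_{\delta\downarrow0}V_\delta$, and resumming yields $\EE[X(t_1)\cdots X(t_N)e^{V_\infty}]$ in the numerator and $\EE[e^{V_\infty}]=\cZ(W,T)$ in the denominator.

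The main obstacle is the justification of the interchanges of limits—the sum over $n$, the lattice sums inside each $V_\delta^n$, and the limit $\delta\downarrow0$—which is exactly why the uniform bound $|V_\delta|\le C$ is essential: it supplies the uniform integrability making dominated convergence applicable both for the series and for the Riemann sums, and the boundedness together with a.e.-continuity of the $X$-correlations off the coincidence set guarantees Riemann integrability so that the lattice sums genuinely converge to the stated integrals. Several routine but error-prone points remain to be handled: the extra nearest-neighbor contribution $\wde_1=\delta^2 W(\delta)\to0$ and the diagonal terms vanish in the limit and do not affect the reference chain, and the precise constant identifying $V_\infty$ with the exponent $\int_{-T}^T\int_{-T}^T W(t-s)X(t)X(s)\,dt\,ds$ of $\cZ(W,T)$ must be tracked with care, since each unordered pair $\{i,j\}$ is summed once whereas the double integral runs over the full square $[-T,T]^2$.
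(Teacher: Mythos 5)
Your proposal is essentially correct but takes a genuinely different route from the paper. Both arguments start from the same factorization of $\bw{\sd{t_1}\cdots\sd{t_N}}_{\delta,T}$ into a numerator and denominator over the nearest-neighbor reference measure $\bn{\cdot}_{\delta,T}$, and both use the explicit transfer-matrix formulas to prove your first ingredient (convergence of the finite-dimensional correlations of the chain to those of $X$). To reinstate the exponential, however, the paper does \emph{not} expand $e^{V_\delta}$: it upgrades the convergence of correlations to weak convergence of the path measures on the Skorokhod space $\cD_T$, proving tightness by a combinatorial count of configurations with closely spaced sign changes, and then applies the Portmanteau theorem to the single bounded, $P_X$-a.e.\ continuous functional $\omega\mapsto\omega(t_1)\cdots\omega(t_N)e^{\cI_0(\omega)}$, after checking that the lattice exponent is uniformly close to $\cI_0(\fd(\sigma))$. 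Your power-series route avoids the Skorokhod topology, tightness and Portmanteau entirely, at the price of having to justify, for each fixed $n$, that the $2n$-fold lattice sum converges to the $2n$-fold integral. Here your stated justification (boundedness plus Riemann integrability of the limit) is not quite enough: the lattice sums are Riemann sums of the \emph{lattice} correlation function, not of its limit, so you need \emph{uniform} convergence of $(u,v)\mapsto\bn{\sd{t_1}\cdots\sd{t_N}\sd{u_1}\sd{v_1}\cdots}_{\delta,T}$ to the (in fact everywhere continuous) limit correlation on $[-T,T]^{2n}$; this does hold, since $\delta\lvert\id(u)-\id(v)\rvert$ differs from $\lvert u-v\rvert$ by at most $\delta$, the total exponent is bounded by $(2n+N)$ sorted gaps, and $(\tanh\jd)^{1/\delta}\to e^{-2}$, but it must be said. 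Likewise the normalization you flagged must be fixed by taking the exponent as the ordered sum $\sum_{i,j}\wde_{\lvert i-j\rvert}\sigma_i\sigma_j$ (the diagonal contributes a $\sigma$-independent constant cancelling between numerator and denominator), since the unordered-pair sum converges to only half of $\int_{-T}^T\int_{-T}^T W(t-s)X(t)X(s)\,dt\,ds$ and would not reproduce $\cZ(W,T)$. With these two points supplied, your argument is a complete and more elementary proof; what the paper's detour through weak convergence buys is the stronger, reusable statement that $\Bn{f(\fd(\sigma))}_{\delta,T}\to\EE[f(X)]$ for every bounded $P_X$-a.e.\ continuous functional $f$ on $\cD_T$.
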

\noindent
As a first step of our proof, we recall the following lemma.
\begin{lemma}\label{corrjump}  Let $t_1 \le \cdots \le t_N$ be an increasing sequence of times. Then, we have
	\[\EE[X(t_1)\cdots X(t_{N})] = e^{-2(|t_2-t_1|+\cdots+|t_{N}-t_{N-1}|)} \qquad\mbox{if}\ N\ \mbox{is even}\]
	and $\EE[X(t_1)\cdots X(t_N)]=0$ if $N$ is odd.
\end{lemma}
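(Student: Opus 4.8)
The plan is to compute the correlation $\EE[X(t_1)\cdots X(t_N)]$ directly from the explicit definition $X(t) = B(-1)^{N(t)}$ in \cref{def:X}, exploiting independence of the Bernoulli variable $B$ and the Poisson process $N$. First I would factor out the $B$-dependence: since $B\in\{-1,1\}$ with $B^2=1$, we have $X(t_1)\cdots X(t_N) = B^N (-1)^{N(t_1)+\cdots+N(t_N)}$, so that $\EE[X(t_1)\cdots X(t_N)] = \EE[B^N]\,\EE[(-1)^{N(t_1)+\cdots+N(t_N)}]$. The factor $\EE[B^N]$ equals $1$ when $N$ is even and $0$ when $N$ is odd (since $\PP(B=1)=\PP(B=-1)=\tfrac12$), which immediately yields the vanishing statement for odd $N$ and reduces the even case to computing the Poisson expectation.

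For the even case, the key observation is that the sign $(-1)^{N(t)}$ changes precisely at the jumps of the Poisson process, so $(-1)^{N(t_k)}(-1)^{N(t_{k+1})} = (-1)^{N(t_{k+1})-N(t_k)}$ depends only on the parity of the number of jumps in the interval $(t_k,t_{k+1}]$. The plan is to telescope the product: writing the $t_k$ in increasing order and pairing consecutive factors, I would express $(-1)^{N(t_1)+\cdots+N(t_N)}$ (for even $N$) in terms of the increments $N(t_{k+1})-N(t_k)$. Concretely, grouping as $\bigl[(-1)^{N(t_1)}(-1)^{N(t_2)}\bigr]\cdots$ is less clean than using that for even $N$ the exponent's parity matches $\sum_{k \text{ odd}} (N(t_{k+1})-N(t_k)) \bmod 2$; alternatively one rewrites the total exponent modulo $2$ as a sum of the nonoverlapping increments across the pairs $(t_1,t_2],(t_3,t_4],\dots$ Either way, by the independence of Poisson increments over disjoint intervals, the expectation factors into a product over these intervals.

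For a single interval, the number of jumps $N(t_{k+1})-N(t_k)$ is Poisson distributed with parameter equal to the interval length $|t_{k+1}-t_k|$ (unit intensity), and the standard generating-function identity gives $\EE[(-1)^{\mathrm{Pois}(\ell)}] = \EE[z^{\mathrm{Pois}(\ell)}]\big|_{z=-1} = e^{\ell(z-1)}\big|_{z=-1} = e^{-2\ell}$. Multiplying these factors over the consecutive gaps produces exactly $e^{-2(|t_2-t_1|+\cdots+|t_N-t_{N-1}|)}$, matching the claimed formula.

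The main obstacle is purely bookkeeping: making the telescoping of $\sum_{k=1}^N N(t_k) \bmod 2$ into a sum over disjoint increments rigorous and correctly pairing the gaps so that every consecutive difference $|t_{k+1}-t_k|$ appears exactly once with the right coefficient. One must verify that, after reducing modulo $2$, the relevant exponent is precisely $\sum_{k=1}^{N-1} c_k (N(t_{k+1})-N(t_k))$ with coefficients $c_k$ of the correct parity so that the per-interval expectation is $e^{-2|t_{k+1}-t_k|}$ for each of the $N-1$ gaps and not just the paired ones. The cleanest route is to note that $(-1)^{N(t_1)+\cdots+N(t_N)}$ has the same parity in its exponent as an alternating-type sum of increments, and then to argue directly on the increments $Y_k := N(t_{k+1})-N(t_k)$, which are independent; I would double-check that the parity computation reproduces each gap $|t_{k+1}-t_k|$ exactly once before invoking the product formula. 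No deep technique is needed beyond independence of disjoint Poisson increments and the elementary generating function, so once the indexing is pinned down the result follows.
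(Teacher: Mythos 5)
The paper does not actually prove this lemma itself --- it only cites \cite[Lemma 1]{Abdessalam.2011} --- so your self-contained argument is a genuinely different (and more informative) route. Your overall strategy is the standard one and is sound: factor out $\EE[B^N]$ by independence to dispose of odd $N$, reduce $(-1)^{N(t_1)+\cdots+N(t_N)}$ modulo $2$ to a product of $(-1)^{Y}$ over independent Poisson increments $Y$ on disjoint intervals, and use $\EE[(-1)^{\mathrm{Pois}(\ell)}]=e^{-2\ell}$.

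However, you leave the decisive bookkeeping point unresolved, and the expectation you voice about it in your last paragraph is wrong. Writing $Y_j = N(t_{j+1})-N(t_j)$, for even $N$ one has $\sum_{k=1}^N N(t_k) = N\,N(t_1)+\sum_{j=1}^{N-1}(N-j)Y_j \equiv \sum_{j\ \mathrm{odd}} Y_j \pmod 2$, so only the $N/2$ disjoint increments over $(t_1,t_2],(t_3,t_4],\ldots,(t_{N-1},t_N]$ survive --- exactly what you compute in your second paragraph. The correct value is therefore $e^{-2(|t_2-t_1|+|t_4-t_3|+\cdots+|t_N-t_{N-1}|)}$; the ``$\cdots$'' in the statement must be read as running over these disjoint pairs, consistent with the discrete analogue in \cref{lemapp:isingnene}\,(iii) and its use in \cref{lem:isinglimitnenesimple}. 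Your final paragraph instead proposes to verify that ``each of the $N-1$ gaps'' contributes a factor $e^{-2|t_{k+1}-t_k|}$; that verification would fail (for $N=4$ and $t_k=k$ the correct value is $e^{-4}$, not $e^{-6}$), and the intermediate gaps $|t_3-t_2|,|t_5-t_4|,\ldots$ genuinely do not appear. Once you commit to the paired reading, your argument is complete; if you instead insisted on recovering all $N-1$ gaps, you would wrongly conclude that your (correct) parity computation is broken.
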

\begin{proof} For a simple proof see for example \cite[Lemma 1]{Abdessalam.2011}.
\end{proof}
\noindent
It is well-known, that the expectation values of Ising models only with nearest neighbor coupling can be calculated explicitly, see also \cref{app:isingnene}. In the limit $\delta\to0$, we use this to obtain the jump process $X$.
\begin{lemma}\label{lem:isinglimitnenesimple}
	Let $-T \le t_1 \le \cdots \le t_N\le T$ be an increasing sequence of times. Then
	\[ \lim_{\delta\downarrow0}\bn{\sd{t_1}\cdots\sd{t_N}}_{\delta,T} = \EE[X(t_1)\cdots X(t_N)]. \]
\end{lemma}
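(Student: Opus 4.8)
The plan is to reduce the statement to the known explicit evaluation of nearest-neighbor Ising correlations and then perform an elementary asymptotic analysis in $\delta$. I would write $a_k = \id(t_k)$ for $1\le k\le N$; since $t_1\le\cdots\le t_N$ lie in $[-T,T]$ and $\id$ is nondecreasing, this gives $a_1\le\cdots\le a_N$ with all $a_k\in\Lambda_{\Ld(T)}$. First I would invoke the explicit formula for the free-boundary one-dimensional chain with pure nearest-neighbor coupling $\jd$ (established in \cref{app:isingnene}): setting $\tau_\delta=\tanh(\jd)$, one has $\bn{\sigma_{a_1}\cdots\sigma_{a_N}}_{\delta,T}=0$ when $N$ is odd, and when $N$ is even
\[
	\bn{\sigma_{a_1}\cdots\sigma_{a_N}}_{\delta,T} = \prod_{m=1}^{N/2}\tau_\delta^{\,a_{2m}-a_{2m-1}}.
\]
The transparent derivation passes to the bond variables $\eta_l=\sigma_l\sigma_{l+1}$, which for free boundary conditions are independent with mean $\tau_\delta$: writing each consecutive pair $\sigma_{a_{2m-1}}\sigma_{a_{2m}}$ as the product of $\eta_l$ over the disjoint ranges $a_{2m-1}\le l<a_{2m}$ and using independence yields the claimed product. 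Crucially, this formula is insensitive to coincidences $a_k=a_{k+1}$ (the corresponding factor is $\tau_\delta^0=1$) and telescopes correctly through repeated indices.

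For odd $N$ there is nothing further to prove, since \cref{corrjump} gives $\EE[X(t_1)\cdots X(t_N)]=0$ as well. So I would assume $N$ even, in which case \cref{corrjump} yields $\EE[X(t_1)\cdots X(t_N)]=e^{-2\sum_{m=1}^{N/2}(t_{2m}-t_{2m-1})}$. It then suffices to show that each factor $\tau_\delta^{\,a_{2m}-a_{2m-1}}$ converges to $e^{-2(t_{2m}-t_{2m-1})}$ as $\delta\downarrow0$, since the limit of the finite product is the product of the limits.

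The required asymptotics are elementary. From $\jd=-\tfrac12\ln\delta$ one gets $e^{2\jd}=\delta^{-1}$, hence $\tau_\delta=\tanh(\jd)=\frac{1-\delta}{1+\delta}$ and therefore $\ln\tau_\delta=\ln(1-\delta)-\ln(1+\delta)=-2\delta+O(\delta^3)$. On the other hand $\id(t)=\lfloor t/\delta+\tfrac12\rfloor=t/\delta+O(1)$ with the error bounded by $1$, so $a_{2m}-a_{2m-1}=(t_{2m}-t_{2m-1})/\delta+O(1)$. Multiplying,
\[
	(a_{2m}-a_{2m-1})\ln\tau_\delta = \left(\frac{t_{2m}-t_{2m-1}}{\delta}+O(1)\right)\bigl(-2\delta+O(\delta^3)\bigr) \xrightarrow{\ \delta\downarrow0\ } -2(t_{2m}-t_{2m-1}),
\]
where the $O(1)$ floor error is harmless because it is multiplied by $\ln\tau_\delta=O(\delta)$. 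Exponentiating gives convergence of each factor, and assembling the finitely many factors finishes the proof.

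The main point requiring care is not any single estimate but the bookkeeping: confirming that the explicit nearest-neighbor formula handles repeated lattice points (ties among the $t_k$) correctly, and ensuring the $O(1)$ error in $\id$ is uniformly controlled so that it genuinely vanishes against $\ln\tau_\delta=O(\delta)$. Everything else is a direct combination of \cref{app:isingnene}, \cref{corrjump}, and the definitions \cref{def:id,def:jd}.
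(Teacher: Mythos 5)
Your proposal is correct and follows essentially the same route as the paper: reduce to the explicit nearest-neighbor correlation formula from \cref{app:isingnene}, control the $O(1)$ discretization error of $\id$, and match the limit against \cref{corrjump}. The only cosmetic difference is that you take logarithms and expand $\ln\tanh(\jd)=-2\delta+O(\delta^3)$, whereas the paper sandwiches the exponent and uses $(\tanh\jd)^{\delta^{-1}}\to e^{-2}$; these are equivalent.
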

\begin{proof} If $N$ is odd both sides vanish (\cref{lem:isinglimitnenesimple,lemapp:isingnene}), so we assume $N$ is even.
	Then, the definition \cref{def:id,lemapp:isingnene} \cref{part:nenecor} yield
	\[ \bn{\sd{t_1}\cdots \sd{t_N}}_{\delta,T} = (\tanh \jd)^{|\id(t_2)-\id(t_1)|+\cdots+|\id(t_N)-\id(t_{N-1})|}.  \]
	Since \cref{def:id} also yields $\frac{|u-v|}{\delta}-1\le|\id(u)-\id(v)|\le \frac{|u-v|}{\delta}+1$ for all $u,v\in\IR$, we obtain
	\[  \left[(\tanh \jd)^{\delta^{-1}}\right]^{|t_2-t_1|+\cdots+|t_N-t_{N-1}|-\delta N} \le  \bn{\sd{t_1}\cdots \sd{t_N}}_{\delta,T} \le \left[(\tanh \jd)^{\delta^{-1}}\right]^{|t_2-t_1|+\cdots+|t_N-t_{N-1}|+\delta N}. \]
	Using $\lim\limits_{\delta\downarrow0}(\tanh \jd)^{\delta^{-1}} = e^{-2}$, the statement follows by \cref{corrjump}.
\end{proof}
\noindent
\cref{lem:isinglimitnenesimple}  shows  \cref{prop:isinglimit} in the  case  $W=0$. To show the proposition
for nonzero $W$, we will use the notion of weak convergence of measures, as outlined in   \cite{SpohnDuemcke.1985,Spohn.1989}.
To this end, we introduce the following definitions and recall elementary properties,
which can be found in \cite[Chapter 3]{Billingsley.1999}. We define $\cD_T$ to be the set of all right-continuous functions $\omega:[-T,T]\to\{\pm 1\}$ with finitely many jumps.
We  equip $\cD_T$ with the so-called Skorokhod topology. That is, if $\Phi_T$ denotes the set of all continuous strictly increasing bijections $\ph:[-T,T]\to[-T,T]$, we define the metric
\begin{equation}\label{def:skorokhod}
	d(\omega,\nu) = \inf_{\ph\in\Phi_T}\left(\|\ph-\Id\|_\infty + \|\omega - \nu\circ \ph\|_\infty\right)
	\qquad\mbox{for}\ \omega,\nu\in\cD_T.
\end{equation}
The topology induced on  $\cD_T$  by $d$ is the Skorokhod topology. We equip $\cD_T$ with the Borel $\sigma$-algebra.
There exists a probability  measure $P_X$ on $\cD_T$, such that for  $\omega \in \cD_T$
the jump process is given by   $X(t)(\omega) = \omega(t)$ for  $t \in [-T,T]$
and for  any  measurable function $f :     \cD_T \to \IR$
\[ \EE[f(X|_{[-T,T]})] = \int f(\omega)dP_X(\omega).  \]
In the following two lemmas,  we will assume this realization of the  jump process $X$. 
We define  $\fd: \cS_{\Ld(T)}\to\cD_T$ by  $\fd(\sigma) = \left[t \mapsto \sd{t}\right]$.
\begin{lemma}\label{lem:isingweakmeasure}
	Let $f:\cD_T\to\IR$ be bounded and continuous, $N\in\IN_0$ and $-T\le t_1\le \cdots\le t_N \le T$. Then	\[\lim_{\delta\downarrow0}\Bn{\sd{t_1}\cdots\sd{t_N}f(\fd(\sigma))}_{\delta,T} = \EE[X(t_1)\cdots X(t_N)f(X)].\]
\end{lemma}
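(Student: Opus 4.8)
The plan is to reinterpret both sides as integrals over the Skorokhod space $\cD_T$ and to derive the claim from weak convergence of the corresponding measures. Let $\mu_\delta$ denote the pushforward of the nearest-neighbor expectation $\bn{\cdot}_{\delta,T}$ under the map $\fd$, so that $\mu_\delta$ is the law on $\cD_T$ of the random step function $\fd(\sigma)$. Since for a configuration $\sigma$ the value $\sd{t_k}$ coincides with $\fd(\sigma)(t_k)$, the left-hand side equals $\int_{\cD_T}g\,d\mu_\delta$ and the right-hand side equals $\int_{\cD_T}g\,dP_X$, where
\[ g(\omega) = \omega(t_1)\cdots\omega(t_N)\,f(\omega) . \]
The integrand $g$ is bounded by $\|f\|_\infty$, but it is \emph{not} continuous on $(\cD_T,d)$: the evaluation maps $\omega\mapsto\omega(t_k)$ fail to be Skorokhod-continuous exactly at those $\omega$ which jump at $t_k$. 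Hence I cannot directly test weak convergence against $g$, and the discontinuities will have to be shown to be negligible.

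First I would prove weak convergence $\mu_\delta\Rightarrow P_X$ on $\cD_T$ as $\delta\downarrow0$, via the standard criterion of \cite[Chapter 3]{Billingsley.1999}: convergence of the finite-dimensional distributions at continuity times of the limit, together with tightness. The finite-dimensional distributions are easy: the coordinates take values in $\{\pm1\}$, so the joint law of $(\omega(t_{i_1}),\dots,\omega(t_{i_k}))$ is determined by its mixed moments $\EE[\prod_m\omega(t_{i_m})]$ through the invertible Walsh transform on $\{\pm1\}^k$. Applying \cref{lem:isinglimitnenesimple} to each subset of the chosen times shows that all these moments converge to $\EE[X(t_{i_1})\cdots X(t_{i_k})]$, whence the finite-dimensional distributions of $\mu_\delta$ converge to those of $P_X$. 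Furthermore every fixed time is almost surely a continuity point of $X$, because the driving Poisson process has no deterministic jump times, so the required convergence holds at all times.

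The main obstacle is tightness. Here I would use that the free-boundary nearest-neighbor Ising chain is exactly a Markov chain whose bond variables $\sigma_i\sigma_{i+1}$ are \emph{independent}, each equal to $-1$ with probability $(1+\delta^{-1})^{-1}\sim\delta$. As $[-T,T]$ carries $\sim 2T/\delta$ bonds, the number of sign changes of $\fd(\sigma)$ in any subinterval is a sum of independent Bernoulli variables whose law converges to a Poisson law, and the positions of these sign changes converge to a rate-one Poisson point process on $[-T,T]$ — matching the flip structure of $X$. In particular the probability that $\fd(\sigma)$ changes sign twice within a window of length $\rho$ is $O(\rho)$, uniformly for small $\delta$. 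This controls the Skorokhod modulus $w'(\cdot,\rho)$ and yields $\lim_{\rho\downarrow0}\limsup_{\delta\downarrow0}\mu_\delta\big(w'(\cdot,\rho)\ge\eta\big)=0$ for every $\eta>0$; together with the boundedness of the values in $\{\pm1\}$ this is the tightness condition of \cite[Chapter 3]{Billingsley.1999}.

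Finally I would upgrade weak convergence to convergence against the discontinuous integrand $g$. Its set of discontinuities in $(\cD_T,d)$ is contained in $\bigcup_{k=1}^N\{\omega\in\cD_T:\omega\text{ jumps at }t_k\}$, which is $P_X$-null since $X$ almost surely has no jump at a fixed time. As $g$ is bounded and $P_X$-almost everywhere continuous, the mapping-theorem strengthening of the portmanteau theorem (cf. \cite[Chapter 1]{Billingsley.1999}) gives $\int g\,d\mu_\delta\to\int g\,dP_X$, which is precisely the asserted limit; the special case $N=0$ reduces to $\int f\,d\mu_\delta\to\int f\,dP_X$ and needs only the weak convergence itself.
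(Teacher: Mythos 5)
Your proposal is correct and follows essentially the same route as the paper: pushforward measures on the Skorokhod space, convergence of finite-dimensional distributions deduced from \cref{lem:isinglimitnenesimple} (moments determine laws on $\{\pm1\}^k$), tightness of the family $(P_\delta)$, and finally the Portmanteau/mapping theorem applied to the bounded integrand $\omega\mapsto\omega(t_1)\cdots\omega(t_N)f(\omega)$, whose discontinuity set is $P_X$-null because $X$ almost surely does not jump at fixed times. The only substantive difference is the tightness step, where you invoke the independence of the bond variables $\sigma_i\sigma_{i+1}$ together with a Poisson limit and a union bound on close pairs of sign changes, while the paper carries out an explicit combinatorial count of configurations with well-separated sign changes via binomial coefficients and Bernoulli's inequality; both arguments rest on the same product structure of the free-boundary nearest-neighbor chain and are equally valid.
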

\begin{remark}
	In fact, we prove the stronger statement $\Bn{f(\fd(\sigma))}_{\delta,T}\xrightarrow{\delta\downarrow0}\EE[f(X)]$ for any bounded  measurable function $f:\cD_T\to\IR$  for which the set of discontinuities $U_f$ satisfies $P_X(U_f) =0$.
\end{remark}
\begin{remark} The proof of \cref{lem:isingweakmeasure} is based on
weak convergence of measures. To obtain weak convergence, we will show
tightness of the associated probability measures by a combinatorical estimate.
We note that tightness can in fact been  shown
by reflection positivity    \cite{Spohn.1989}.
\end{remark} 
\begin{proof}
	We prove that the measures on $\cD_T$ associated to the nearest neighbor Ising model weakly converge to the measure given by the jump process $X$. Then, the statement follows by the Portmanteau theorem \cite[Theorem 3.16]{Klenke.2020}. To prove weak convergence, we need to combine the convergence of moments from \cref{lem:isinglimitnenesimple} and the tightness of the Ising measures, cf. \cite[Theorem 13.1]{Billingsley.1999}.
	
	 For $\delta>0$, let $P_\delta$ be the pushforward measure on $\cD_T$ obtained from the Ising probability measure on $\cS_{\Ld(T)}$ through the (obviously measurable) map $\fd$, i.e.,
	\begin{equation}\label{eq:isingprob}
		P_\delta(A) = \sum_{\sigma\in \fd^{-1}(A)} \frac{e^{-E_{J_\jd,\Ld(T)}(\sigma)}}{Z_{J_\jd,\Ld(T)}} \qquad\mbox{for all measurable sets}\ A\subset \cD_T.
	\end{equation}
	Hence,
	\[ \Bn{\sd{t_1}\cdots\sd{t_N}f(\fd(\sigma))}_{\delta,T} = \int \omega(t_1)\cdots \omega(t_N)f(\omega)dP_\delta(\omega).\]
	Now, for any $k\in\IN$ and $t=(t_1,\ldots,t_k)\in[-T,T]^k$, we define the projections $\pi_t:\cD_T\to\{\pm 1\}^k,\omega\mapsto (\omega(t_1),\ldots,\omega(t_k))$. Observe that the expectation values in \cref{lem:isinglimitnenesimple} uniquely determine the probability measures $P_\delta\circ \pi_t^{-1}$ and $P_X\circ\pi_t^{-1}$, respectively, since every function on the set $\{-1,1\}$ is given as a linear combination of the constant function  one and  the identity function.
	 Hence, \cref{lem:isinglimitnenesimple} implies the weak convergence of $P_\delta\circ\pi_t^{-1}$ to $P_X\circ \pi_t^{-1}$.
	 To deduce weak convergence of $P_\delta$ to $P_X$ as $\delta\downarrow0$, we need to prove that the family $(P_\delta)$ is tight (cf. \cite[Theorem 13.1]{Billingsley.1999}).
	 Let us reformulate this statement similar to \cite[Theorem 13.2]{Billingsley.1999}.
	 For $\eps>0$, we denote by $\Omega_\eps$ the set of all $\omega\in\cD_T$ having two discontinuities with a distance less than $\eps$, i.e.,
	 \[ \Omega_\eps = \left\{\omega\in\cD_T:\exists t_1,t_2\in(-T,T):|t_2-t_1|<\eps,\lim_{t\uparrow t_1} \omega(t)\ne \omega(t_1), \lim_{t\uparrow t_2} \omega(t) \ne \omega(t_2)\right\}.  \]
	 The family $(P_\delta)$ is tight if and only if
	 \begin{equation}\label{eq:tightness}
	 	\lim_{\eps\downarrow 0} \limsup_{\delta\downarrow 0} P_\delta(\Omega_\eps) = 0.
	 \end{equation}
 	For now fix $\eps>0$ and $\delta\in(0,\eps)$. For $\sigma\in\cS_{\Ld(T)}$, we denote by $n_\sigma$ the number of sign changes (cf. \cref{lemapp:isingnene} \cref{part:neneen}). We observe that $\fd(\sigma)\in\Omega_\eps$ if $n_\sigma>2T/\eps$. Otherwise, $\fd(\sigma)\notin\Omega_\eps$ if and only if all sign changes have a distance of at least $\eps/\delta$. If $n_\sigma=k$ for some fixed $k\in\IN$, then simple combinatorics yield that there are $\binom{2\Ld(T)-(k-1)\lfloor\eps/\delta\rfloor}{k}$ possibilities to position the sign changes, such that all distances are larger than $\eps/\delta$.\footnote{Explicitly, the combinatorial argument is as follows: In a chain of $N+1$ Ising spins, there are $\binom N k$ possibilities to position $k$ sign changes. This is equal to the number of possibilities to choose $k+1$ positive integers $x_1,\ldots,x_{k+1}$, such that $x_1+\cdots+x_{k+1}=N+1$. Now, if the distance between any two sign changes shall be larger than $m$, this is equivalent to requiring $x_2,\ldots,x_k> m$. By the change of variables $y_1=x_1$, $y_i=x_i-m$ for $i=2,\ldots,k$, $y_{k+1}=x_{k+1}$, we find the number of possibilities to be equal to the number of possibilities to choose $y_1,\ldots,y_{k+1}\in\IN$, such that $y_1+\ldots + y_{k+1}=N+1-(k-1)m$. Recalling the initial argument, this is $\binom{N-(k-1)m}{k}$. In our case we have $N=2\Ld(T)$ and $m=\lfloor\eps/\delta\rfloor$.} Taking into account that an element $\sigma\in\cS_{\Ld(T)}$ is uniquely determined by the choice of the value $\sigma_{\Ld(T)}\in\{\pm1\}$ and the position of its sign changes, we obtain
 	\begin{equation}\label{eq:combin}
 		\#\left\{\sigma\in\fd^{-1}(\Omega_\eps):n_\sigma=k\right\} = \begin{cases}\displaystyle 2\binom{2\Ld(T)}{k} & \mbox{for}\ k > \frac {2T}\eps, \\ \displaystyle2\binom{2\Ld(T)}{k} -2\binom{2\Ld(T)-(k-1)\lfloor\eps/\delta\rfloor}{k} & \mbox{for}\ 2\le k\le \frac {2T}\eps.	\end{cases}
 	\end{equation}
 	From the definition of nearest neighbor coupling, it easily follows that (cf. \cref{lemapp:isingnene} \cref{part:neneen})
 	\[E_{J_\jd,\Ld(T)}(\sigma) = 2\jd(n_\sigma-\Ld(T)) \qquad\mbox{for all}\ \sigma\in\cS_{\Ld(T)}.\]
 	Hence, combining \cref{eq:combin,eq:isingprob} and summing over all possible numbers of spin changes, we obtain
 	\begin{equation}\label{eq:prob}
 		\begin{aligned}
 			P_\delta(\Omega_\eps)
 			= &
 			\sum_{k=2}^{\lfloor\frac{2T}{\eps}\rfloor}\left(\binom{2\Ld(T)}{k} -\binom{2\Ld(T)-(k-1)\lfloor\eps/\delta\rfloor}{k}\right) \frac{2e^{2\jd(\Ld(T)-k)}}{Z_{J_\jd,\Ld(T)}} \\
 			& \qquad + \sum_{k=\lfloor\frac{2T}{\eps}\rfloor+1}^{2\Ld(T)}\binom{2\Ld(T)}{k}\frac{2e^{2\jd(\Ld(T)-k)}}{Z_{J_\jd,\Ld(T)}}.
 		\end{aligned}
 	\end{equation}
 	Since it is possible to explicitly calculate the partition function for nearest neighbor coupling (cf. \cref{lemapp:isingnene} \cref{part:nenepart}), we have
 	\begin{equation}\label{eq:boundby1}
 		\frac{2e^{2\jd\Ld(T)}}{Z_{J_\jd,\Ld(T)}} =  \left(\frac{e^\jd}{e^\jd+e^{-\jd}}\right)^{2\Ld(T)} < 1.
 	\end{equation}
 	Moreover, inserting the definition \cref{def:jd}, we have $e^{-2\jd k} = \delta^k$ and hence
 	\begin{equation}\label{eq:tailestimate}
 		\binom{2\Ld(T)}{k}e^{-2\jd k} \le \frac{(2\Ld(T))^k}{k!}\delta^k \le \frac{(2T+\delta)^k}{k!} \qquad\mbox{for all}\ k\le2\Ld(T),
 	\end{equation}
	 where we used $\Ld(T) = \lfloor\frac T\delta+\frac 12\rfloor \le \frac T\delta +\frac 12$ in the last step.
 	Along the same lines, we use Bernoulli's inequality to obtain for $k\le T/\eps$
 	\begin{equation}\label{eq:bernoulli}
 		\begin{aligned}
 			\left(\binom{2\Ld(T)}{k} -\binom{2\Ld(T)-(k-1)\lfloor\eps/\delta\rfloor}{k}\right)e^{-2\jd k}
 			&\le \frac{(2\Ld(T))^k - (2\Ld(T)-k(\eps/\delta+1))^k}{k!}\delta ^k\\
 			&\le \frac{(2\Ld(T))^k}{k!}\frac{k^2(\eps/\delta+1)}{2\Ld(T)}\delta^k \\&\le \frac{(2T+\delta)^{k-1}}{(k-1)!} k(\eps+\delta).
 		\end{aligned}
 	\end{equation}
	We can now insert \cref{eq:bernoulli,,eq:tailestimate,,eq:boundby1} into \cref{eq:prob}. Hence, for any $s_\eps\in[0,\frac 1\eps]$, we have
 	\[P_\delta(\Omega_\eps) \le \sum_{k=2}^{\lfloor Ts_\eps\rfloor} \frac{(2T+\delta)^{k-1}}{(k-1)!} k(\eps+\delta) + \sum_{k=\lfloor Ts_\eps\rfloor+1}^{2\Ld(T)}\frac{(2T+\delta)^k}{k!} \le Ts_\eps(\eps+\delta)e^{2T+\delta} + \sum_{k=\lfloor Ts_\eps\rfloor+1}^{\infty}\frac{(2T+\delta)^k}{k!},\]
 	where we estimated the first half of the first sum in \cref{eq:prob} by \cref{eq:bernoulli} and the second half using \cref{eq:tailestimate}.
 	Taking the limit $\delta\downarrow 0$, we observe
 	\[\limsup_{\delta\downarrow 0}P_\delta(\Omega_\eps)  \le Ts_\eps \eps e^{2T} + \sum_{k=\lfloor Ts_\eps\rfloor+1}^{\infty}\frac{(2T)^k}{k!}.\]
 	We choose $s_\eps$ such that both $\lim_{\eps\downarrow0}s_\eps = \infty$ and $\lim_{\eps\downarrow 0}s_\eps\eps =0$ hold, e.g., $s_\eps=\eps^{-1/2}$. Then, the summability of the second term proves \cref{eq:tightness} and hence $P_\delta$ weakly converges to $P_X$.
	
	Since $f$ is bounded and continuous, the statement for $N=0$ directly follows from the definition of weak convergence.
	Further, observe that for any fixed $N\in\IN$ and $t\in\IR^N$ the function $\omega\to\pi_t(\omega)f(\omega)$ is only discontinuous at those  $\omega$ having jumps exactly at the points given by the $N$-tuple $t$. Hence, the set of discontinuities has $P_X$-measure zero and the statement follows from the Portmanteau theorem \cite[Theorem 3.16]{Klenke.2020}.
\end{proof}
\noindent
We apply above lemma to prove the expectation value in \cref{prop:isinglimit} is a limit of expectation values in the nearest neighbor Ising model.
\begin{lemma}\label{lem:isinglimitnene}
	Assume $W:[-T,T]\to\IR$ is even and continuous and $\wde$ is as defined in \cref{def:wd}.\\
	For $N\in\IN_0$, let $-T\le t_1\le \cdots\le t_N \le T$. Then
	\begin{align*}
		\lim_{\delta\downarrow0}&\Bn{\sd{t_1}\cdots\sd{t_N}\exp\left(\sum_{i,j\in \Lambda_{\Ld(T)}}\wde_{\lvert i-j\lvert}\sigma_i\sigma_j\right)}_{\delta,T}
		\\&\qquad\qquad\qquad\qquad
		= \EE\left[X(t_1)\cdots X(t_N)\exp\left(\int_{-T}^{T}\int_{-T}^T W(t-s)X(s)X(t) ds dt\right)\right].
	\end{align*}
\end{lemma}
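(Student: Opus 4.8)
The plan is to absorb the long-range interaction into the observable and reduce to \cref{lem:isingweakmeasure}. Define $Q\colon\cD_T\to\IR$ and $F\colon\cD_T\to\IR$ by
\[
Q(\omega) = \int_{-T}^T\int_{-T}^T W(t-s)\,\omega(t)\omega(s)\,dt\,ds,\qquad F=e^{Q}.
\]
Since $|\omega|\equiv1$ and $W$ is continuous, hence bounded on $[-2T,2T]$, the exponent $Q$ is bounded uniformly in $\omega$ and $F$ is bounded. I would then verify that $F$ is continuous for the Skorokhod metric. If $\omega_n\to\omega$ in $\cD_T$, then, because all paths are $\{\pm1\}$-valued, one can pick time changes $\ph_n\to\Id$ uniformly with $\omega_n=\omega\circ\ph_n$ for large $n$; as $\omega$ is continuous off its finitely many jumps, $\omega(\ph_n(t))\to\omega(t)$ for almost every $t$, and dominated convergence with dominating function $|W(t-s)|\in L^1([-T,T]^2)$ gives $Q(\omega_n)\to Q(\omega)$, hence continuity of $F$.

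The technical heart is to show that the discrete interaction in the observable is, uniformly in $\sigma$, a Riemann approximation of $Q(\fd(\sigma))$. With $C_i=\{t\in[-T,T]:\id(t)=i\}$ the step path obeys $\fd(\sigma)(t)=\sigma_i$ on $C_i$, so
\[
Q(\fd(\sigma)) = \sum_{i,j\in\Lambda_{\Ld(T)}}\sigma_i\sigma_j\int_{C_i}\int_{C_j}W(t-s)\,dt\,ds,
\]
which I compare with $\sum_{i,j}\wde_{|i-j|}\sigma_i\sigma_j=\sum_{i,j}\sigma_i\sigma_j\,\delta^2W(\delta(i-j))$, using \cref{def:wd} and that $W$ is even. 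Every cell has length $\delta$ except the two endpoint cells, whose lengths lie in $[0,\delta]$. For $t\in C_i$, $s\in C_j$ we have $|(t-s)-\delta(i-j)|\le\delta$ from \cref{def:id}, so replacing $W(t-s)$ by $W(\delta(i-j))$ costs at most $\rho(\delta)\,|C_i|\,|C_j|$, where $\rho$ is the modulus of continuity of $W$ on $[-3T,3T]$; summed over $i,j$ this is $\rho(\delta)\,(2T)^2\to0$. The discrepancy between $|C_i|\,|C_j|$ and $\delta^2$ is nonzero only when $i$ or $j$ is an endpoint index, and a direct count bounds its total contribution (together with the diagonal $i=j$ terms) by $O\big(\delta\sup_{[-2T,2T]}|W|\big)\to0$. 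Hence
\[
\sum_{i,j\in\Lambda_{\Ld(T)}}\wde_{|i-j|}\sigma_i\sigma_j = Q(\fd(\sigma)) + r_\delta(\sigma),\qquad \sup_{\sigma}|r_\delta(\sigma)|\xrightarrow{\delta\downarrow0}0.
\]

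Finally I would combine the pieces. The last display gives $\exp\big(\sum_{i,j}\wde_{|i-j|}\sigma_i\sigma_j\big)=F(\fd(\sigma))\,e^{r_\delta(\sigma)}$, and writing $e^{r_\delta}=1+(e^{r_\delta}-1)$ decomposes the left-hand side of the lemma as $\Bn{\sd{t_1}\cdots\sd{t_N}F(\fd(\sigma))}_{\delta,T}$ plus a remainder. Since $|\sd{t_1}\cdots\sd{t_N}|=1$ and $\bn{1}_{\delta,T}=1$, the remainder is at most $\|F\|_\infty\,\sup_\sigma|e^{r_\delta(\sigma)}-1|\to0$. As $F$ is bounded and continuous, \cref{lem:isingweakmeasure} identifies the limit of the main term with $\EE[X(t_1)\cdots X(t_N)F(X)]$, which is exactly the claimed right-hand side. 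I expect the main obstacle to be the uniform-in-$\sigma$ Riemann estimate, especially the bookkeeping for the truncated endpoint cells and the diagonal, because it is precisely this uniformity that lets $e^{r_\delta}-1$ be pulled out of the expectation; establishing Skorokhod continuity of $F$ is the secondary point requiring care, though it becomes routine once one notes that Skorokhod convergence of $\{\pm1\}$-valued paths amounts to convergence of their jump times.
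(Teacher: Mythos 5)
Your proposal is correct and follows essentially the same route as the paper: define the bounded, Skorokhod-continuous functional $e^{Q}$ on $\cD_T$, apply \cref{lem:isingweakmeasure} to the main term, and control the discrepancy between the discrete quadratic form and $Q(\fd(\sigma))$ uniformly in $\sigma$ via the uniform continuity of $W$, pulling the resulting $e^{r_\delta}-1$ factor out of the expectation. The paper packages the last step as the algebraic identity $g_\delta(\sigma)-f_0(\fd(\sigma))=f_0(\fd(\sigma))\bigl(e^{\mathcal J_\delta(\sigma)-\cI_0(\fd(\sigma))}-1\bigr)$, which is the same estimate you perform; your extra detail on the endpoint cells and on Skorokhod continuity only fills in what the paper calls straightforward.
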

\begin{proof}
	We define $f_0:\cD_T\to\IR$ by
	\[f_0(\omega) = \omega(t_1)\cdots\omega(t_N)e^{\cI_0(\omega)},
	\qquad\mbox{where}\quad
	\cI_0(\omega) = \int_{-T}^T\int_{-T}^T W(t-s)\omega(t)\omega(s)dsdt.\]
	It is straight forward  to verify that $\cI_0:\cD_T\to\IR$ is bounded and continuous. Hence, we can apply \cref{lem:isingweakmeasure} and obtain
	\begin{equation}\label{eq:f0expbound}
		\lim_{\delta\downarrow0}\Bn{f_0(\fd(\sigma))}_{\delta,T} = \EE\left[ f_0 (X)\right].
	\end{equation}
	It remains to consider the left hand side and to  analyze $f_0(\fd(\sigma))$. 
Further, for $\sigma\in\cS_{\Ld(T)}$, we define
	\[g_\delta(\sigma) = \sd{t_1}\cdots\sd{t_N}e^{\mathcal{J}_\delta(\sigma)},
	\qquad\mbox{where}\quad
	\mathcal{J}_\delta(\sigma) = \sum_{i,j\in\Lambda_{\Ld(T)}}\wde_{\lvert i-j\lvert}\sigma_i\sigma_j.\]
	Since continuous functions on compact intervals are uniformly continuous, for any $\eps>0$ there exists a $\delta_\eps>0$, such that
	\[|W(t)-W(s)|<\eps\qquad\mbox{for any}\ t,s\in[-T,T]\ \mbox{with}\ |t-s|<\delta_\eps.\]
	Then, for any $\delta\in(0,\delta_\eps)$ and $\sigma\in\cS_{\Ld(T)}$, we use \cref{def:wd,def:id} to obtain
	\begin{align}
		\left|\mathcal{J}_\delta(\sigma)-  \cI_0(\fd(\sigma))\right| &= \left|\sum_{i,j\in\Lambda_{\Ld(T)}}\left[\wde_{\lvert i-j\lvert}\sigma_i\sigma_j - \int_{(i-\frac 12)\delta}^{(i+\frac 12)\delta}\int_{(j-\frac 12)\delta}^{(j+\frac 12)\delta} W(t-s)\sigma_i\sigma_jdsdt\right]\right|
		\nonumber \\& \le \sum_{i,j\in \Lambda_{\Ld(T)}} \delta^2 \sup \{|W(\delta t)-W(\delta |i-j|)|: t\in[|i-j|-1,|i-j|+1]\}
	\nonumber 	 \\&\le  (2\Ld(T)+1)^2\delta^2\eps \le 4 ( T+\delta)^2\eps.\label{eq:Iestimate}
	\end{align}
	Now, for all $\sigma\in\cS_{\Ld(T)}$ we have the algebraic identity
	\[ g_\delta(\sigma)-  f_0(\fd(\sigma)) =  f_0(\fd (\sigma))\left(e^{\mathcal{J}_\delta(\sigma)-  \cI_0(\fd(\sigma))}-1\right) .
\]
Using this identity and \eqref{eq:Iestimate}	 it follows that there exist  constants $C_1$ and $C_2$,  such that for $\eps > 0$ sufficiently small,  $\delta\in(0,\delta_\eps)$ and all  $\sigma\in\cS_{\Ld(T)}$ 
	\[\left|g_\delta(\sigma)-  f_0(\fd(\sigma))   \right| 
\le |  f_0(\fd(\sigma))     |C_1 \left|\mathcal{J}_\delta(\sigma)-  \cI_0(\fd(\sigma))  \right| \le C_2e^{4T^2\|W\|_\infty}(T+\delta)^2\eps. \]
	Since $\sigma\in\cS_{\Ld(T)}$ was arbitrary, this estimate also  holds for the expectation value, i.e.,
	\begin{equation}\label{eq:f0fdbound}
	\left|\Bn{g_\delta(\sigma)}_{\delta,T}-\Bn{ f_0(\fd(\sigma))     }_{\delta,T}\right|
	\le
	C_2e^{4T^2\|W\|_\infty}(T+\delta)^2\eps.
	\end{equation}
	Combining \cref{eq:f0fdbound,eq:f0expbound}, the statement follows.
\end{proof}
\noindent
It now remains to rewrite the Ising expectation value in above \lcnamecref{lem:isinglimitnene} as a correlation function.
\begin{proof}[\textbf{Proof of \cref{prop:isinglimit}}]
	By the definition \cref{def:expIsing}, we observe
	\[ \bw{\sd{t_1}\cdots\sd{t_N}}_{\delta,T} = \frac{\Bn{\sd{t_1}\cdots\sd{t_N}\exp\left(\sum\limits_{i,j\in\Lambda_{\Ld(T)}}\wde_{\lvert i-j\lvert}\sigma_i\sigma_j\right)}_{\delta,T}}{\Bn{\exp\left(\sum\limits_{i,j\in\Lambda_{\Ld(T)}}\wde_{\lvert i-j\lvert}\sigma_i\sigma_j\right)}_{\delta,T}}.  \]
	Hence, the statement follows from \cref{lem:isinglimitnene,def:partitionjump}
\end{proof}

\section{Proof of the Main Result}\label{sec:proof}

In this section we combine the central statements from the previous sections to  the proof of our main result \cref{main}. We  use the definitions from the previous \lcnamecref{sec:continuumlimit}.

\begin{proof}[\textbf{Proof of \cref{main}}] Fix $T>0$. Then using Fubini in the first equality and  \cref{prop:isinglimit}  in the second equality, we find
	\begin{align}
		\frac{1}{\cZ(W,T)}\EE&\left[\frac 1T \left(\int_{-T}^T X(t)dt\right)^2\exp\left(\int_{-T}^T\int_{-T}^T X(t)X(s)W(t-s)dtds\right)\right]\nonumber\\
		&=
		 \frac{1}{T\cZ(W,T)}\int_{-T}^T\int_{-T}^T \EE\left[X(u)X(v)\exp\left(\int_{-T}^T\int_{-T}^TX(t)X(s)W(t-s)dtds\right)\right]dudv\nonumber\\
		 &= \frac 1T \lim_{\delta\downarrow0} \int_{-T}^T\int_{-T}^T \bw{\sd u\sd v}_{\delta,T}dudv\nonumber\\
		 &=\lim_{\delta\downarrow0}\frac{1}{T}\sum_{i,j\in\Lambda_{\Ld(T)}}\delta^2\bw{\sigma_i\sigma_j}_{\delta,T} , \label{eq:finalargument}
	\end{align}
	where in the last step, we  calculated the integral  using that the integrand is a  step function.
To estimate \cref{eq:finalargument} we want to use  \cref{prop:corbound}.
First observe that 	by  definition \cref{def:jd} we find
	\begin{equation} \label{eq:ineqondelta}  \frac{1}{1-\tanh \jd} = \frac{e^{2\jd}+1}{2} < \frac 1\delta \qquad\mbox{for any}\ \delta\in(0,1).\end{equation}
	 Further, using the definition \cref{def:wd},  $W\in L^1(\IR)$ and the continuity of $W$, we have
	 \[\wde\in\ell^1 \qquad\mbox{and}\qquad \lim\limits_{\delta\downarrow0}\frac1\delta\|\wde\|_1 = \|W\|_1.\]
	Combining the above two relations, we find  for any constant $D>1$
	\begin{equation}  \label{eq:estfinalproof}  \frac{\|\wde\|_1}{1-\tanh \jd}\le D \|W\|_1 \qquad\mbox{if}\ \delta > 0 \ \mbox{is sufficiently small}.  \end{equation}
	Now let $\eps >0$ and $C_\eps$ be as in \cref{prop:corbound}. If $\|W\|_1\le \eps/D$ and $\delta > 0$ is sufficiently small, it follows from  \eqref{eq:estfinalproof}    that  the assumption \eqref{eq:condw}
of the proposition holds,   since   $\tanh x \leq x$  for $x \in [ 0,\infty)$. In that case   it hence follows from   \cref{prop:corbound} that
	\[ \sum_{i\in\IZ}\bw{\sigma_i\sigma_j}_{\delta,T}\le \frac{C}{1-\tanh(j_\delta)}\le \frac C\delta , \]
where  we used   \eqref{eq:ineqondelta} in the second inequality.
The last  displayed inequality implies
	\[ \sum_{i,j\in\Lambda_{\Ld(T)}}\bw{\sigma_i\sigma_j}_{\delta,T} \le \frac{C}{\delta}\Ld(T) . \]
	Inserting this into  \cref{eq:finalargument} and using $\Ld(T)\le \frac T\delta + \frac 12$ finishes the proof.
\end{proof}

\appendix
\section{Ising Model  with Nearest Neighbor Coupling}\label{app:isingnene}
In this appendix we consider  the Ising model with nearest neighbor coupling and calculate   the  known partition function and correlation functions. These
calculations are well-known and can be found in most textbooks covering the Ising model. Here  we use  slightly different notation than in the rest  of the paper.

We fix the lattice length $L\in\IN$, the lattice $\Lambda=\{-L,\ldots,L\}\subset \IZ$, the spin configuration space $\cS=\{\pm 1\}^{\Lambda}$, and the interaction strength $j\ge 0$.
The   Ising energy is defined by
\[ E(\sigma) = -\sum_{i=-L}^{L-1}j\sigma_i\sigma_{i+1} \qquad\mbox{for}\ \sigma\in\cS,  \]
the partition function by
\[ Z = \sum_{\sigma\in\cS}e^{-E(\sigma)} ,  \]
and the expectation value of $f:\cS\to\IR$ by
\[ \braket{f} = \frac{1}{Z}\sum_{\sigma\in\cS} f(\sigma)e^{-E(\sigma)}.  \]
We prove the following statements.
\begin{lemma}\label{lemapp:isingnene}\
	\begin{enumerate}[(i)]
		\item\label{part:neneen} For $\sigma\in\cS$ we write $n_\sigma = \#\{i=-L,\ldots,L-1:\sigma_i\sigma_{i+1}=-1\}$. Then
		$E(\sigma) = 2j(n_\sigma-L).$
		\item\label{part:nenepart} We have
			$ Z = 2\left(e^j+e^{-j}\right)^{2L}$.
		\item\label{part:nenecor}
			For $n\in\IN$ and $-L\le i_1\le \cdots \le i_n\le L$ we have
				\[\braket{\sigma_{i_1}\cdots\sigma_{i_n}} = \begin{cases} \displaystyle \tanh(j)^{\sum\limits_{k=1}^{N}|i_{2k}-{i_{2k-1}}|} & \mbox{if}\ n=2N,\\ 0 & \mbox{else}.  \end{cases}\]
	\end{enumerate}
\end{lemma}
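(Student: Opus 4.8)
The plan is to reduce all three parts to the elementary decoupling of the nearest neighbor chain into independent bond variables. For Part~\cref{part:neneen} I would argue by counting: among the $2L$ bonds indexed by $i=-L,\dots,L-1$, exactly $n_\sigma$ satisfy $\sigma_i\sigma_{i+1}=-1$ while the remaining $2L-n_\sigma$ satisfy $\sigma_i\sigma_{i+1}=+1$. Hence $\sum_{i=-L}^{L-1}\sigma_i\sigma_{i+1}=(2L-n_\sigma)-n_\sigma=2(L-n_\sigma)$, and multiplying by $-j$ gives $E(\sigma)=2j(n_\sigma-L)$.

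The key device for Parts~\cref{part:nenepart} and~\cref{part:nenecor} is the change of variables $\sigma\mapsto(\sigma_{-L},\eta)$, where $\eta=(\eta_i)_{i=-L}^{L-1}$ with $\eta_i=\sigma_i\sigma_{i+1}$. Since $\sigma_{i+1}=\sigma_i\eta_i$, one recovers $\sigma$ uniquely from $\sigma_{-L}\in\{\pm1\}$ and $\eta\in\{\pm1\}^{2L}$, so this is a bijection onto $\{\pm1\}\times\{\pm1\}^{2L}$. Under it the weight $e^{-E(\sigma)}=\prod_{i=-L}^{L-1}e^{j\eta_i}$ is independent of $\sigma_{-L}$ and factorizes over the bonds, giving
\[ Z=\sum_{\sigma_{-L}=\pm1}\prod_{i=-L}^{L-1}\sum_{\eta_i=\pm1}e^{j\eta_i}=2\,(e^j+e^{-j})^{2L}, \]
which is Part~\cref{part:nenepart}. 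Moreover, this exhibits $\braket{\cdot}$ as the law under which the bonds $\eta_i$ are independent with $\braket{\eta_i}=\tanh(j)$ and $\sigma_{-L}$ is an independent fair sign.

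For Part~\cref{part:nenecor} the odd case is immediate from the global spin-flip symmetry $\sigma\mapsto-\sigma$, which leaves $E$ invariant but sends $\sigma_{i_1}\cdots\sigma_{i_n}$ to $(-1)^n\sigma_{i_1}\cdots\sigma_{i_n}$; for odd $n$ the expectation therefore equals its own negative and vanishes. For even $n=2N$ I would pair the ordered sites and telescope each pair: since $i_{2k-1}\le i_{2k}$,
\[ \sigma_{i_{2k-1}}\sigma_{i_{2k}}=\prod_{l=i_{2k-1}}^{i_{2k}-1}\eta_l, \qquad\text{so}\qquad \sigma_{i_1}\cdots\sigma_{i_{2N}}=\prod_{k=1}^{N}\prod_{l=i_{2k-1}}^{i_{2k}-1}\eta_l. \]
Because the sites are sorted, the index blocks $\{i_{2k-1},\dots,i_{2k}-1\}$ are pairwise disjoint, so the right-hand side is a product of \emph{distinct} bond variables over a set $S$ with $|S|=\sum_{k=1}^N(i_{2k}-i_{2k-1})$. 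Using the independence of the bonds and $\braket{\eta_l}=\tanh(j)$ then yields $\braket{\sigma_{i_1}\cdots\sigma_{i_{2N}}}=\tanh(j)^{|S|}$, as claimed.

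The only genuinely combinatorial point — and the step I would be most careful about — is the disjointness of the telescoped blocks and the resulting count $|S|=\sum_{k}(i_{2k}-i_{2k-1})$. Here the ordering $i_1\le\cdots\le i_{2N}$ is what guarantees $i_{2k}\le i_{2k+1}$, so the half-open blocks cannot overlap; coinciding sites merely produce empty blocks, consistent with $|i_{2k}-i_{2k-1}|=0$. Everything else is the standard bond-variable factorization recorded in the second paragraph.
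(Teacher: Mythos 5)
Your proposal is correct and follows essentially the same route as the paper: the bond-variable change of coordinates $\eta_i=\sigma_i\sigma_{i+1}$ (the paper anchors at $\sigma_L$ rather than $\sigma_{-L}$, which is immaterial), the telescoping of $\sigma_{i_{2k-1}}\sigma_{i_{2k}}$ into a product of distinct bonds over disjoint blocks, and the spin-flip symmetry for odd $n$. Your explicit remark on the disjointness of the telescoped blocks is exactly the point the paper encodes via its indicator $s_l$, so nothing is missing.
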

\begin{proof}\cref{part:neneen} follows directly from the definition.
	For the proof of \cref{part:nenepart,part:nenecor}, we use the change of variables
	\[ \sigma_i' = \sigma_i\sigma_{i+1} \quad\mbox{for}\ i=-L,\ldots, L-1, \quad \sigma_L' = \sigma_L.\]
	Then, we have $\displaystyle E(\sigma)=-j\sum_{i=-L}^{L-1}\sigma_i' $ and hence
	\[ Z = \sum_{\sigma'\in \cS}\prod_{i=-L}^{L-1}e^{j\sigma_i '} = 2\prod_{i=-L}^{L-1}\sum_{\sigma_i'=\pm1}e^{j\sigma_i'} = 2\left(e^j+e^{-j}\right)^{2L},\]
	so \cref{part:nenepart} is proved.
	Now, if $-L\le i< j\le L$, we observe
	\[\sigma_i\sigma_j = (\sigma_i\sigma_{i+1})(\sigma_{i+1}\sigma_{i+2})\cdots(\sigma_{j-1}\sigma_j) = \sigma_i'\sigma_{i+1}'\cdots\sigma_{j-1}'.\]
	Assume $n=2N$. Then, we have
	\begin{align*}
		Z \braket{\sigma_{i_1}\cdots\sigma_{i_{2N}}} &= \sum_{\sigma\in \cS}\sigma_{i_1}\cdots\sigma_{i_{2N}}e^{-E(\sigma)}\\
		&= \sum_{\sigma\in\cS }e^{-E(\sigma)}\prod_{a=1}^N (\sigma_{i_{2a-1}}\sigma_{i_{2a-1}+1})\cdots (\sigma_{i_{2a}-1}\sigma_{i_{2a}})\\
		& = \sum_{\sigma'\in\cS}\prod_{a=1}^N \sigma_{i_{2a-1}}' \cdots \sigma_{i_{2a}-1}'\prod_{i=-L}^{L-1}e^{j\sigma_i'}.
	\end{align*}
	Now, for $l\in\Lambda$, we set \[s_l =  \begin{cases} 1 & \mbox{if there is some}\ a\in\IN\ \mbox{with}\ i_{2a-1}\le l< i_{2a},\\ 0 & \mbox{else}.  \end{cases}\] and
	$ S= \# \{l\in\Lambda:s_l=1\} = |i_2-i_1|+\cdots+|i_{2N}-i_{2N-1}| $. Inserting above, we obtain
	\begin{align*}
		 Z \braket{\sigma_{i_1}\cdots\sigma_{i_{2N}}} & = 2\prod_{l=-L}^{L-1}\sum_{\sigma_l'=\pm1} (\sigma_l')^{s_l}e^{j\sigma_l'}\\
		& = 2 (e^{j} - e^{-j} )^{S }   (e^{j} + e^{-j} )^{2L-S }.
	\end{align*}
	Combined with \cref{part:nenepart}, we obtain the identity \cref{part:nenecor} for even $n$. It remains to consider the case that $n$ is odd. The statement then, however, follows similar to the proof of \cref{lem:zeroexp} by the change of variables $\sigma\mapsto -\sigma$.
\end{proof}

\bibliographystyle{halpha-abbrv}
\bibliography{lit}

\end{document}